\newcommand{\important}[2] {\node[name=#1, minimum size=8mm] {#2};}
\newcommand{\leaf}[2] {\node[rectangle,name=#1, minimum size=5mm] {#2};}
\newlength{\alginputwidth}
\newlength{\algboxwidth}
\newsavebox{\algbox}
\newsavebox{\captionbox}
    {
        \setlength{\algboxwidth}{\columnwidth}
        \addtolength{\algboxwidth}{-\columnsep}
        \addtolength{\algboxwidth}{-1mm}
        \setlength{\alginputwidth}{\algboxwidth}
        \addtolength{\alginputwidth}{-1.7cm}
        \begin{figure}[tb]
            \vspace*{2mm}
            \centering
            \begin{lrbox}{\captionbox}
                \begin{minipage}[b]{\algboxwidth}
                    \centering
                    \caption{#1}
                    \label{#2}
                \end{minipage}
            \end{lrbox}
            \begin{lrbox}{\algbox}
                \begin{minipage}[b]{\algboxwidth}
                    \footnotesize
                    \vspace*{2mm}
    } 
    {
                    \vspace*{0.2mm}
               \end{minipage}
            \end{lrbox}
            \fbox{\usebox{\algbox}\hspace*{1mm}}
            \usebox{\captionbox}
            \vspace*{-4mm}
        \end{figure}
    }
\newsavebox{\algcodebox}
    {
        \begin{enumerate}
            \setlength{\itemsep}{2pt}
            \setlength{\parsep}{0pt}
            \setlength{\topsep}{0pt}
            \setlength{\parskip}{0pt}
            \setlength{\partopsep}{0pt}
    } 
    {\end{enumerate}}
\newcommand{\paramproblem}[4]{\noindent {\sc #1}
\\
{\bf Given:} #2\\
{\bf Parameter:} #3\\
{\bf Question:} #4}
\newtheorem{theorem}{Theorem}[section]
\newtheorem{lemma}[theorem]{Lemma}
\newtheorem{proposition}[theorem]{Proposition}
\newtheorem{observation}[theorem]{Observation}
\newtheorem{reduction}{Reduction Rule}[section]
\newtheorem{case}[theorem]{Case}
\newtheorem{subcase}[theorem]{SubCase}
\newtheorem{branchrule}[theorem]{BranchRule}
\newcommand{\blackslug}{\penalty 1000\hbox{
    \vrule height 8pt width .4pt\hskip -.4pt
    \vbox{\hrule width 8pt height .4pt\vskip -.4pt
          \vskip 8pt
      \vskip -.4pt\hrule width 8pt height .4pt}
    \hskip -3.9pt
    \vrule height 8pt width .4pt}}
\newenvironment{proof}{\noindent {\sc Proof.}$\;\;$\rm}{\qed}
\newcommand{\qed}{\hspace*{\fill}\blackslug}
\newtheorem{definition}{Definition}[section]
\begin{document}
\psset{unit=1pt}
\title{Algorithms for Cut Problems on Trees}

\author{
  Iyad Kanj
    \thanks{School of  Computing, DePaul University,
        243 S. Wabash Avenue, Chicago, IL 60604, USA.
        Email: \texttt{ikanj@cs.depaul.edu}.}
  \and
  Guohui Lin
    \thanks{Department of Computing Science, University of Alberta,
        Edmonton, Alberta T6G 2E8, Canada.
        Email: \texttt{\{guohui,weitian\}@ualberta.ca}.}
  \and
  Tian Liu
    \thanks{Key Laboratory of High Confidence Software Technologies, Ministry of Education,
        Institute of Software, School of Electronic Engineering and Computer Science,
        Peking University, Beijing 100871, China.
        Email: \texttt{lt@pku.edu.cn}.}
  \and
  Weitian Tong
  \footnotemark[2]
  \and
  Ge Xia
    \thanks{Department of Computer Science, Lafayette College,
        506 Acopian Engineering Center, Easton, PA 18042, USA.
        Email: \texttt{xiag@lafayette.edu}.}
  \and
  Jinhui Xu
    \thanks{Department of Computer Science and Engineering,
        State University of New York at Buffalo (SUNY Buffalo),
        338 Davis Hall, Buffalo, NY 14260, USA.
        Email: \texttt{jinhui@buffalo.edu}.}
  \and
  Boting Yang
    \thanks{Department of Computer Science, University of Regina,
        Regina, Saskatchewan S4S 0A2, Canada.
        Email: \texttt{boting@cs.uregina.ca}.}
   \and
  Fenghui Zhang
    \thanks{Google
	Kirkland, 747 6th Street South,
	Kirkland, WA 98033, USA.
	\texttt{fhzhang@gmail.com}.}
  \and
  Peng Zhang
    \thanks{School of Computer Science and Technology, Shandong University,
        Jinan 250101, China.
        Email: \texttt{algzhang@sdu.edu.cn}.}
  \and
  Binhai Zhu
    \thanks{Department of Computer Science, Montana State University,
        Bozeman, MT 59717, USA.
        Email: \texttt{bhz@cs.montana.edu}.}
}
\date{}
\maketitle

\begin{abstract}
We study the {\sc multicut on trees} and the {\sc generalized multiway Cut on trees} problems. For the {\sc multicut on trees} problem, we present a parameterized algorithm that runs in time
$O^{*}(\rho^k)$, where $\rho = \sqrt{\sqrt{2} + 1} \approx 1.555$ is the positive root of the polynomial $x^4-2x^2-1$.
This improves the current-best algorithm of Chen et al. that runs in time $O^{*}(1.619^k)$. For the {\sc generalized multiway cut on trees} problem, we show that this problem is solvable in polynomial time if the number of terminal sets is fixed; this answers an open question posed in a recent paper by Liu and Zhang. By reducing the {\sc generalized multiway cut on trees} problem to the {\sc multicut on trees} problem, our results give a parameterized algorithm that solves the {\sc generalized multiway cut on trees} problem in time $O^{*}(\rho^k)$, where $\rho = \sqrt{\sqrt{2} + 1} \approx 1.555$ time.
\end{abstract}

\section{Introduction}
\label{sec:intro}
Let $T$ be a tree. We consider the following problems:

\paramproblem{{\sc multicut on trees} ({\sc MCT})}{A tree $T$ and a set $R$ of pairs of vertices of $T$ called {\em terminals}: $R=\{(u_1, v_1), \ldots, (u_r, v_r)\}$}{$k$}{Is there a set of at most $k$ edges in $T$ whose removal disconnects each $u_i$ from $v_i$, for $i=1, \ldots, r$?} \\

For convenience, we will refer to a pair of terminals $(u_i, v_i) \in R$ by a {\em request}, and we will also say that $u_i$ {\em has a request to} $v_i$, and vice versa.

\paramproblem{{\sc generalized multiway cut on trees} ({\sc GMWCT})}{A tree $T$ and and a collection of vertex/terminal-sets $S_1, \ldots S_r$} {$k$}{Is there a set of at most $k$ edges in $T$ whose removal disconnects each pair of vertices in the same terminal set $S_i$, for $i=1, \ldots, r$?}

As the name indicates, the {\sc GMWCT} problem generalizes the well-known {\sc multiway cut on trees} problem in which there is only one set of terminals.

The {\sc MCT} problem has applications in networking~\cite{survey}. The problem is known to be NP-complete, and its optimization version is APX-complete and has an approximation ratio of 2~\cite{DBLP:journals/algorithmica/GargVY97}. Assuming the Unique Games Conjecture, the {\sc MCT} problem cannot be approximated to within $2-\epsilon$ \cite{KR08}. From the parameterized complexity perspective, Guo and Niedermeier~\cite{guo} showed that the {\sc MCT} problem is fixed-parameter tractable by giving an $O^{*}(2^k)$ time algorithm for the problem. (The asymptotic notation $O^*(f(k))$ suppresses any polynomial factor in the input length.) They also showed that {\sc MCT} has an exponential-size kernel. Bousquet, Daligault, Thomass\'{e}, and Yeo, improved the upper bound on the kernel size for {\sc MCT} to $O(k^6)$~\cite{thomase}, which was subsequently improved very recently by Chen et al.~\cite{multicut} to $O(k^3)$. Chen et al.~\cite{multicut} also gave a parameterized algorithm for the problem running in time $O^*(1.619^k)$.

The {\sc multiway cut on trees} problem (i.e., there is one set of terminals) was proved to be
solvable in polynomial time in~\cite{CR91,CB04}. Chopra and Rao \cite{CR91}
first gave a polynomial-time greedy algorithm for the problem. More recently, Costa and Billionnet~\cite{CB04} proved that {\sc multiway cut on trees} can be solved
in linear time by dynamic programming. Very recently, Liu and Zhang~\cite{LZ12} generalized the {\sc multiway cut on trees} problem from one set of terminals to allowing multiple terminal sets, which results in the {\sc GMWCT} defined above. They showed that the {\sc GMWCT} problem is fixed-parameter tractable
by reducing it to the {\sc MCT} problem~\cite{LZ12}.  Clearly, the {\sc GMWCT} problem is NP-complete when the number of terminal sets is part of the input by a trivial reduction from the {\sc MCT} problem. Liu and Zhang asked about the complexity of the problem if the number of terminal sets is a constant (i.e., not part of the input)~\cite{LZ12}.

We mention that the {\sc multicut} and {\sc multiway cut} problems on general graphs are very important problems that have been extensively studied.
Marx~\cite{1140647} studied the parameterized complexity of several graph separation problems, including {\sc multicut} and {\sc multiway cut} on general graphs. Recently, the {\sc multicut} problem on general graphs was shown to be fixed-parameter tractable independently by Bousquet, Daligault, and Thomass\'{e}~\cite{newthomas}, and by Marx and Razgon~\cite{newmarx}, answering an outstanding open problem in parameterized complexity theory.
Very recently, Chitnis et al.~\cite{chitnis} proved that the {\sc multiway cut} problem on directed graphs is fixed-parameter tractable when parameterized by the size of the solution (i.e., cut set). Also very recently, Klein and Marx~\cite{kleinmarxplanar}, and Marx~\cite{marxplanar} gave upper bounds and lower bounds, respectively, on the parameterized complexity of the {\sc planar multiway cut} problem parameterized by the number of terminals.

In the current paper we present a parameterized algorithm that runs in time
$O^{*}(\rho^k)$, where $\rho = \sqrt{\sqrt{2} + 1} \approx 1.555$ is the positive root of the polynomial $x^4-2x^2-1$.
This improves the current-best algorithm of Chen et al.~\cite{multicut} that runs in time $O^{*}(1.619^k)$. This improvement is obtained by extending the connection between the {\sc MCT} problem and the {\sc Vertex Cover} problem, first exploited in the paper of Chen et al.~\cite{multicut}. For the {\sc GMWCT} problem, we show that the problem is solvable in polynomial time if the number of terminal sets is a constant; this answers the open question posed by Liu and Zhang. By reducing the {\sc GMWCT} problem to the {\sc MCT} problem, our result implies that the {\sc GMWCT} problem is also solvable in $O^{*}(\rho^k)$, where $\rho = \sqrt{\sqrt{2} + 1} \approx 1.555$ time.

\section{Preliminaries}
\label{sec:prelim}
We assume familiarity with basic graph theory and parameterized complexity notation and terminology. For more information, we refer the reader to~\cite{fptbook,grohe,rolf,west}.

For a graph $H$ we denote by $V(H)$ and $E(H)$ the set of vertices and edges of $H$, respectively. For a vertex $v \in H$, $H-v$ denotes $H[V(H) \setminus \{v\}]$, and for a subset of vertices $S \subseteq V(H)$, $H-S$ denotes $H[V(H) \setminus S]$. By {\em removing} a subgraph $H'$ of $H$ we mean removing $V(H')$ from $H$ to obtain $H - V(H')$. Two vertices $u$ and $v$ in $H$ are said to be {\em adjacent} or {\em neighbors} if $uv \in E(H)$. For two vertices $u, v \in V(H)$, we denote by $H - uv$ the graph $(V(H), E(H) \setminus \{uv\})$. By {\em removing} an edge $uv$ from $H$ we mean setting $H = H -uv$. For a subset of edges $E' \subseteq E(H)$, we denote by $H-E'$ the graph $(V(H), E(H) \setminus E')$. For a vertex $v \in H$, $N(v)$ denotes the set of neighbors of $v$ in $H$. The {\em degree} of a vertex $v$ in $H$, denoted $deg_H(v)$, is $|N(v)|$. The {\em degree} of $H$, denoted $\Delta(H)$, is $\Delta(H) = \max\{deg_H(v): v \in H\}$. The {\em length} of a path in a graph $H$ is the number of edges in it.
A {\em vertex cover} for a graph $H$ is a set of vertices such that each edge in $H$ is incident to at least one vertex in this set. A vertex cover for $H$ is {\em minimum} if its cardinality is minimum among all vertex covers of $H$; we denote by $\tau(H)$ the cardinality/size of a minimum vertex cover of $H$.

A {\em tree} is a connected acyclic graph. A {\em leaf} in a tree is a vertex of degree at most 1. A nonleaf vertex in a tree is called an {\em internal} vertex. For two vertices $u$ and $v$, the {\em distance} between $u$ and $v$ in $T$, denoted $dist_T(u, v)$, is the length of the unique path between $u$ and $v$ in $T$. A leaf $x$ in a tree is said to be {\em attached} to vertex $u$ if $u$ is the unique neighbor of $x$ in the tree. A {\em forest} is a collection of disjoint trees.

Let $T$ be a tree with root $r$. For a vertex $u \neq r$ in $V(T)$, we denote by $\pi(u)$ the parent of $u$ in $T$. A {\em sibling} of $u$ is a child $v \neq u$ of $\pi(u)$ (if exists), and an {\em uncle} of $u$ is a sibling of $\pi(u)$. A vertex $v$ is a {\em nephew} of a vertex $u$ if $u$ is an uncle of $v$. For a vertex $u \in V(T)$, $T_u$ denotes the subtree of $T$ rooted at $u$. The {\em children} of a vertex $u$ in $V(T)$ are the vertices in $N(u)$ if $u = r$, and in $N(u) - \pi(u)$ if $u \neq r$. A vertex $u$ is a {\em grandparent} of a vertex $v$ if $\pi(v)$ is a child of $u$. A vertex $v$ is a {\em grandchild} of a vertex $u$ if $u$ is a grandparent of $v$.

A {\it parameterized problem} is a set of instances of the form $(x,
k)$, where $x \in \Sigma^*$ for a finite alphabet set $\Sigma$, and
$k$ is a non-negative integer called the {\em parameter}.
A parameterized problem $Q$ is {\it fixed parameter tractable}, or
simply FPT, if there exists an algorithm that on input $(x, k)$
decides if $(x, k)$ is a yes-instance of $Q$ in time $f(k)|x|^{O(1)}$,
where $f$ is a computable function independent of $|x|$.

Let $(T, R)$ be an instance of {\sc multicut on trees}. A subset of edges $E' \subseteq E(T)$ is said to be an {\em edge cut}, or simply a {\em cut}, for $R$ if for every request $(u, v)$ in $R$, there is no path between $u$ and $v$ in $T - E'$. The {\em size} of a cut $E'$ is $|E'|$. A cut $E'$ is {\em minimum} if its cardinality is minimum among all cuts.

Let $(T, R, k)$ be an instance of {\sc multicut on trees}, and let $uv$ be an edge in $E(T)$. If we know that edge $uv$ can be included in the solution sought, then we can remove $uv$ from $T$ and decrement the parameter $k$ by 1; we say in this case that we {\em cut} edge $uv$. By {\em cutting} a leaf we mean cutting the unique edge incident to it. If $T$ is a rooted tree and $u \in T$ is not the root, we say that we {\em cut} $u$ to mean that we cut the edge $u\pi(u)$. On the other hand, if we know that edge $uv$ can be excluded from the solution sought, we say in this case that edge $uv$ is {\em kept}, and we can {\em contract} it by identifying the two vertices $u$ and $v$, i.e., removing $u$ and $v$ and creating a new vertex with neighbors $(N(u) \cup N(v)) \setminus \{u, v\}$). If edge $uv$ is contracted and $w$ is the new vertex, then any request in $R$ of the form $(u, x)$ or $(v, x)$ is replaced by the request $(w, x)$.

For a vertex $u$ in $T$, we define an auxiliary graph $G_u$ as follows. The vertices of $G_u$ are the leaves in $T$ attached to $u$ (if any). Two vertices $x$ and $y$ in $G_u$ are adjacent in $G_u$ if and only if there is a request between $x$ and $y$ in $R$. Without loss of generality, we shall call the vertices in $G$ with the same names as their corresponding leaves in $T$, and it will be clear from the context whether we are referring to the leaves in $T$ or to their corresponding vertices in $G_u$.

It is not difficult to see that if $C$ is a vertex cover for $G_u$ then the
edge-set $E_C = \{uw \in E(T) \mid w \in C\}$, which has the
same cardinality as $C$, cuts every request between a pair of leaves
attached to $u$. On the other hand, for any cut $K$ for
$R$, the vertices in $G_u$ corresponding to the leaves in $T$
that are incident to the edges in $K$ form a vertex cover for $G_u$. It follows
that the number of edges in any cut $K$ that are incident to the
leaves  corresponding to the vertices in $G_u$ is at least the size of a minimum
vertex cover for $G_u$.

\section{Reduction rules}
\label{sec:structure}

All the reduction rules, terminologies, and branching rules in this section appear in~\cite{multicut}.

Let $(T, R, k)$ be an instance of {\sc multicut on trees}. We can assume that $T$ is nontrivial (contains at least three vertices). We shall assume that $T$ is rooted at some internal vertex in the tree (chosen arbitrarily), say vertex $r$. A vertex $u \in V(T)$ is {\em important} if all the children of $u$ are leaves. For a set of vertices $V' \subseteq V(T)$ and a vertex $u \in V'$, $u$ is {\em farthest} from $r$ with respect to $V'$ if $dist_T(u, r) = \max\{dist_T(w, r) \mid w \in V'\}$.

The following reduction rules for {\sc multicut on trees} are folklore, easy to verify, and can be implemented to run in polynomial time (see~\cite{thomase,guo} for proofs). Therefore, we omit their proofs.

\begin{reduction} [{\bf Useless edge}]
\label{red:0.1}
If no request in $R$ is disconnected by the removal of edge $uv \in E(T)$, then remove edge $uv$ from $T$.
\end{reduction}


\begin{reduction} [{\bf Unit request}]
\label{red:0.3}
If $(u,v) \in R$ and $uv \in E(T)$, then cut $uv$ (i.e., remove $uv$ from $T$ and decrement $k$ by 1).
\end{reduction}

\begin{lemma}
\label{lem:minvc}
Let $(T, R, k)$ be an instance of {\sc multicut on trees}. Suppose that $T$ is rooted at $r$. There exists a minimum cut $E_{min}$ for the requests of $R$ in $T$ such that, for every important vertex $u \in V(T)$, the subset of edges in $E_{min}$ that are incident to the children of $u$ corresponds to a minimum vertex cover of $G_u$.
\end{lemma}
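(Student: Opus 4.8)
The plan is to prove this by a local exchange argument, starting from an arbitrary minimum cut $E_{min}$ and modifying it at each important vertex until the edges incident to the children of every important vertex form a minimum vertex cover of the corresponding auxiliary graph. Fix an important vertex $u$ and let $C$ be the set of children $w$ of $u$ with $uw \in E_{min}$. Since every child of $u$ is a leaf, and $E_{min}$ must cut every request between two leaves attached to $u$, the correspondence between cuts and vertex covers of $G_u$ noted in Section~\ref{sec:prelim} shows that $C$ is a vertex cover of $G_u$. If $|C| = \tau(G_u)$ then $C$ is already minimum and nothing needs to be done at $u$; the interesting case is $|C| > \tau(G_u)$, where I would replace the edge set $\{uw : w \in C\}$ by $\{uw : w \in C^*\}$ for a fixed minimum vertex cover $C^*$ of $G_u$.

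The subtlety is that an edge $uw$ may be cutting not only the intra-$u$ requests (the edges of $G_u$) but also requests $(w,x)$ whose path leaves the subtree $T_u$. Assuming the instance is reduced, so that Reduction Rule~\ref{red:0.3} has eliminated every request of the form $(w,u)$, each request incident to a child $w$ of $u$ that is not an edge of $G_u$ is a request $(w,x)$ with $x \notin T_u$, and its unique path passes through the edge $u\pi(u)$. Hence I would accompany the replacement by adding the single edge $u\pi(u)$ to the cut. After this modification the children-edges of $u$ are exactly $\{uw : w \in C^*\}$, which cut all intra-$u$ requests because $C^*$ covers $G_u$; all external requests incident to children of $u$ are cut by $u\pi(u)$, since deleting that edge separates $T_u$ from the rest of $T$; and every request not incident to a child of $u$ is unaffected, because the only paths through a leaf-edge $uw$ are those of requests having $w$ as an endpoint, and we never delete a non-leaf edge. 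Thus the modified set is again a cut. (If $u = r$ happens to be important then $T$ is a star, every request is intra-$u$, and no parent edge is needed.)

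The main point to verify, and the step requiring the most care, is that the size does not increase, and this is exactly where restricting to the case $|C| > \tau(G_u)$ pays off: deleting $\{uw : w \in C\}$ and inserting $\{uw : w \in C^*\}$ saves $|C| - \tau(G_u) \ge 1$ edges, which pays for the single added edge $u\pi(u)$, so the net change at $u$ is at most $0$. Finally I would argue that the local modifications can be carried out at all important vertices at once. Distinct important vertices have disjoint sets of leaf-children (a leaf has a unique neighbor, hence a unique parent), and each edge $u\pi(u)$ is neither a leaf-edge of any important vertex (as $u$ and $\pi(u)$ are both internal) nor equal to $u'\pi(u')$ for a different important vertex $u'$; therefore the edge sets touched by the modifications at different important vertices are pairwise disjoint. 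Consequently the total change in size is the sum of the per-vertex changes, each at most $0$, so the resulting edge set is a cut of size at most $|E_{min}|$, hence again a minimum cut, whose edges incident to the children of every important vertex $u$ realize a minimum vertex cover of $G_u$. The treatment of the external requests together with the charging of the added edge $u\pi(u)$ against the vertex-cover savings is the crux; the remainder is bookkeeping.
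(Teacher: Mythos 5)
Your proof is correct and is essentially the argument this paper relies on: the paper itself omits the proof of Lemma~\ref{lem:minvc} (deferring to~\cite{multicut}), but your exchange --- replacing the child-edges cut at an important vertex $u$ by the edges of a minimum vertex cover of $G_u$ plus the edge $u\pi(u)$, paid for by the saving $|C|-\tau(G_u)\ge 1$, with the disjointness of the touched edge sets making the modifications at distinct important vertices independent --- is exactly the replacement sketched in the paper's own ``favoring'' observation and used in the source proof. Your caveat is also on point: the lemma implicitly presupposes that unit requests have been eliminated by Reduction Rule~\ref{red:0.3}, since a request $(w,u)$ between a leaf $w$ and its parent $u$ forces $uw$ into every cut even when $w$ lies in no minimum vertex cover of $G_u$, and your proof correctly flags and uses this assumption.
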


\begin{reduction}\label{red:3}
Let $(T, R, k)$ be an instance of {\sc multicut on trees}, where $T$ is rooted at $r$, and let $u \neq r$ be a vertex in $T$.  If there exists no request between a vertex in $V(T_u)$ and a vertex in $V(T_{\pi(u)}) \setminus V(T_u)$ then contract the edge $u\pi(u)$.
\end{reduction}

\begin{reduction}\label{red:4}
Let $(T, R, k)$ be an instance of {\sc multicut on trees}, where $T$ is rooted at $r$, and let $u$ be an important vertex in $T$ such that $\Delta(G_u) \leq 2$. If there exists a (leaf) child $l$ of $u$ that is not in any minimum vertex cover of $G_u$, then contract the edge $ul$.
\end{reduction}

\begin{reduction}\label{red:5}
Let $(T, R, k)$ be an instance of {\sc multicut on trees}, where $T$ is rooted at $r$, and let $w$ be an important vertex in $T$ such that $\Delta(G_w) \leq 2$. For every path in $G_w$ of even length, cut the leaves in $children(w)$ that correspond to the unique minimum vertex cover of $P$.
\end{reduction}

\begin{definition} \rm
Let $(T, R, k)$ be an instance of {\sc multicut on trees}, where $T$ is rooted at $r$, and let $w \neq r$ be an important vertex in $T$. A request between a vertex in $V(T_w)$ and a vertex in $V(T_{\pi(w)}) \setminus V(T_w)$ is called a {\em cross request}.
\end{definition}

\begin{reduction}\label{red:6}
Let $(T, R, k)$ be a reduced instance of {\sc multicut on trees}, where $T$ is rooted at $r$, and let $w \neq r$ be an important vertex in $T$ such that $\Delta(G_w) \leq 2$. If there is a minimum vertex cover of $G_w$ such that cutting the leaves in this minimum vertex cover cuts all the cross requests from the vertices in $V(T_w)$ then contract $w\pi(w)$.
\end{reduction}

\begin{definition}\rm
\label{def:strongreduced}
The instance $(T, R, k)$ of {\sc multicut on trees} is said to be {\em reduced} if none of the above reduction rules is applicable to the instance.
\end{definition}

\begin{proposition}(\cite{multicut})\label{prop:reduced}
Let $(T, R, k)$ be a reduced instance, where $T$ is rooted at a vertex $r$. Then the following are true:

\begin{itemize}
\item[(i)] For any vertex $u \in V(T)$, there exists no request between $u$ and $\pi(u)$.

\item[(ii)] For any vertex $u \neq r$ in $V(T)$, there exists a request between some vertex in $V(T_u)$ and some vertex in $V(T_{\pi(u)}) \setminus V(T_u)$.

\item[(iii)] For any internal vertex $u \in V(T)$, there exists at least one request between the vertices in $V(T_u) - u$.

\item[(iv)] For any important vertex $w \in V(T)$ such that $\Delta(G_w) \leq 2$ and any child $u$ of $w$, there exists a request between $u$ and a sibling of $u$, and hence all the children of an important vertex are good leaves.

\item[(v)] For any important vertex $w \in V(T)$ such that $\Delta(G_w) \leq 2$, $G_w$ contains no path of even length.

\item[(vi)] For every leaf $l \in V(T)$, there exists a minimum vertex cover of $G_{\pi(l)}$ that contains $l$.

\item[(vii)] For any important vertex $w \neq r$ in $V(T)$ such that $\Delta(G_w) \leq 2$, there is no minimum vertex cover of $G_w$ such that cutting the leaves in this minimum vertex cover cuts all the cross requests from the vertices in $V(T_w)$.
\end{itemize}
\end{proposition}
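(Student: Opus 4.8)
The plan is to prove each of the seven items by contraposition against a single reduction rule: since $(T,R,k)$ is reduced, none of Reduction Rules~\ref{red:0.1}--\ref{red:6} applies, and I will show that the negation of each item would render one of these rules applicable. The correspondence is direct for most items. For (i), if there were a request $(u,\pi(u))$, then, because $u\pi(u)\in E(T)$, this is exactly a unit request, so Reduction Rule~\ref{red:0.3} would cut it. For (ii), the hypothesis of Reduction Rule~\ref{red:3} applied to $u$ is precisely the negation of (ii), so the edge $u\pi(u)$ would be contracted. Item (vii) is literally the statement that Reduction Rule~\ref{red:6} does not apply to $w$. And for (v), recalling that $\Delta(G_w)\le 2$ forces $G_w$ to be a disjoint union of paths and cycles, any even-length path would be removed by Reduction Rule~\ref{red:5}; hence none can survive.

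For the two minimum-vertex-cover items (iv) and (vi) I would first record the elementary observation that an isolated vertex of a graph never lies in a minimum vertex cover, while in a disjoint union of odd-length paths and (even or odd) cycles \emph{every} vertex lies in some minimum vertex cover. Item (iv) then follows from Reduction Rule~\ref{red:4}: if a child $u$ of the important vertex $w$ had no request to a sibling, then $u$ would be isolated in $G_w$, hence in no minimum vertex cover of $G_w$, so Reduction Rule~\ref{red:4} would contract $uw$; thus every child carries a sibling request, which is exactly the condition defining a good leaf. For (vi), the non-applicability of Reduction Rule~\ref{red:4} gives that every leaf child of an important $w$ with $\Delta(G_w)\le 2$ belongs to some minimum vertex cover of $G_w$; equivalently, combining (iv) (no isolated vertex) and (v) (no even-length path), $G_w$ is a union of odd-length paths and cycles, and the observation above places every child of $w$ in some minimum vertex cover.

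Item (iii) is the one genuinely combinatorial step, and I would prove it by induction on $|V(T_u)|$ using (i) and (ii). For internal $u$, pick any child $v$ and apply (ii) to $v$ to obtain a request $(a,b)$ with $a\in V(T_v)$ and $b\in V(T_u)\setminus V(T_v)$; note $a\neq u$ since $a$ lies strictly below $u$. If $b\neq u$, then $a,b\in V(T_u)-u$ and we are done. Otherwise $b=u$, and $a\neq v$, for $a=v$ would give the request $(v,\pi(v))$ forbidden by (i); hence $v$ is internal with $|V(T_v)|<|V(T_u)|$, and the induction hypothesis yields a request inside $V(T_v)-v\subseteq V(T_u)-u$. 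In the base case $u$ is important, all children are leaves, and the branch $b=u$ cannot occur, so (ii) together with (i) directly produces a request between two leaf children.

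The main obstacle I expect is the minimum-vertex-cover bookkeeping underlying (iv) and especially (vi): pinning down exactly which leaves are forced into some minimum vertex cover, and checking that it is the $\Delta(G_w)\le 2$ structure left after (v)---odd paths and cycles only---that guarantees membership, rather than a generic property of vertex covers. The inductive step for (iii) is clean once (i) and (ii) are established, so the real care goes into the vertex-cover characterization and into verifying that each alleged violation matches the precise precondition of its corresponding reduction rule.
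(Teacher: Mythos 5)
Your reconstruction is essentially the intended one, but there is nothing in the paper to compare it to: Proposition~\ref{prop:reduced} is imported from~\cite{multicut} without proof (the section explicitly defers all of these rules and facts to that reference). Judged on its own merits, your derivation is correct for (i), (ii), (v), and (vii), which are indeed direct contrapositives of Reduction Rules~\ref{red:0.3}, \ref{red:3}, \ref{red:5}, and~\ref{red:6} respectively; your proof of (iv) via the observation that an isolated vertex of $G_w$ lies in no minimum vertex cover, so Rule~\ref{red:4} would contract $uw$, is right (and your gloss of the undefined term ``good leaf'' as a leaf with a request to a sibling is the only sensible one); and your strong induction for (iii) is complete --- the case split on whether the request guaranteed by (ii) for a child $v$ hits $u$ itself, with (i) forcing $a \neq v$ and hence $v$ internal in that branch, is exactly the missing combinatorial content.

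The one point you should not pass over silently is the scope of (vi). Your argument establishes (vi) only for leaves attached to an \emph{important} vertex $w$ with $\Delta(G_w) \leq 2$, which is all that the precondition of Rule~\ref{red:4} (and the structure ``odd paths and cycles'' from (iv) and (v)) can deliver. The statement, however, quantifies over \emph{every} leaf $l \in V(T)$, and as literally written it is not derivable from the listed rules --- indeed it is false for reduced instances: let $u$ have a leaf child $l$ and an internal child, with the only request at $l$ going to a nephew; then $l$ is isolated in $G_u$, so no minimum vertex cover of $G_u$ contains $l$, yet no rule applies ($u$ is not important, so Rule~\ref{red:4} is inapplicable). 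Even for important $w$, if $\Delta(G_w) \geq 3$ (say $G_w = K_{1,3}$), a degree-one vertex of the star lies in no minimum vertex cover, and only BranchRule~\ref{branch:1} --- a branching rule, not a reduction rule --- excludes this configuration. Since every later use of (vi) in the paper (e.g., Observation~\ref{obs:000} and the favoring arguments) concerns children of important vertices after $\Delta(G_w) \leq 2$ has been enforced, your implicit restriction matches the intended reading, but a careful write-up must state the restriction explicitly rather than claim the quantification over all leaves.
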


\begin{observation}\rm
\label{obs:000}
If there exists a child $u$ of an important vertex $w$ such that $u$ has a cross request to its grandparent $\pi(w)$, then cut $u$.
This can be justified as follows. Any minimum cut of $T$ either cuts $w\pi(w)$ or does not cut it. If the minimum cut cuts $w\pi(w)$, then we can assume that it cuts edge $wu$ as well because by Reduction Rule~\ref{red:4}, $u$ is in some minimum vertex cover of $G_w$. On the other hand, if the minimum cut does not cut $w\pi(w)$, then it must cut edge $wu$ since $(u, \pi(w)) \in R$. It follows that in both cases there is a minimum cut that cuts $wu$. We have $L(k) \leq L(k-1)$ in this case.
\end{observation}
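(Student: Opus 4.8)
The plan is to justify the rule by establishing that some minimum cut for $R$ in $T$ contains the edge $wu$; once this is known, committing to $wu$ (removing it from $T$ and decrementing $k$) preserves a minimum cut, which is exactly the content of ``cut $u$.'' I would begin by fixing a minimum cut $E_{min}$ and, invoking Lemma~\ref{lem:minvc}, assume without loss of generality that the edges of $E_{min}$ incident to the children of $w$ correspond to a minimum vertex cover of $G_w$. The proof then splits according to whether $E_{min}$ cuts the edge $w\pi(w)$.

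The case where $E_{min}$ does \emph{not} cut $w\pi(w)$ is immediate. Because $w$ is important, $u$ is a leaf whose only neighbor is $w$, so the unique path in $T$ between $u$ and its grandparent $\pi(w)$ uses exactly the two edges $wu$ and $w\pi(w)$. Disconnecting the request $(u,\pi(w)) \in R$ therefore forces $E_{min}$ to contain $wu$ or $w\pi(w)$; since the latter is not cut, $E_{min}$ already contains $wu$.

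The case where $E_{min}$ \emph{does} cut $w\pi(w)$ is where the real work lies. Here I would invoke Proposition~\ref{prop:reduced}(vi), which (as the instance is reduced) furnishes a minimum vertex cover $C$ of $G_w$ containing $u$, and then modify $E_{min}$ by swapping the children-edges of $w$ it selects for the set $\{wx : x \in C\}$. The resulting set $E_{min}'$ has the same size as $E_{min}$, since both children-selections are minimum vertex covers of $G_w$, and it contains $wu$. The main obstacle is confirming that $E_{min}'$ is still a valid cut. The key observation is that altering which pendant edges $wx$ are removed can only affect requests having a child of $w$ as an endpoint, and these split into three harmless types: requests between two children of $w$ are edges of $G_w$ and stay disconnected because $C$ is a vertex cover; requests between a child and $w$ itself do not exist by Proposition~\ref{prop:reduced}(i); and every cross request leaving a child of $w$ must traverse $w\pi(w)$, which $E_{min}'$ still cuts. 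Hence $E_{min}'$ is a minimum cut containing $wu$.

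Putting the two cases together yields a minimum cut that cuts $wu$, so the rule is safe. Finally, since cutting $u$ is a deterministic reduction rather than a genuine branching—we produce a single child instance with parameter $k-1$—the induced recurrence on the number of leaves of the search tree is the trivial $L(k) \le L(k-1)$.
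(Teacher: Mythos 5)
Your proof is correct and takes essentially the same route as the paper's: the same case split on whether $w\pi(w)$ is cut, with the cut-$w\pi(w)$ case handled by the same cover-swap argument (the paper cites Reduction Rule~\ref{red:4} for the fact that $u$ lies in some minimum vertex cover of $G_w$, whereas you cite the equivalent Proposition~\ref{prop:reduced}(vi)). You merely spell out the swap-validity details (intra-$G_w$ requests covered by $C$, no child--parent requests by Proposition~\ref{prop:reduced}(i), cross requests still severed by $w\pi(w)$) that the paper leaves implicit in its subsequent ``favoring'' observation.
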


\begin{observation}\rm  Let $T$ be a tree rooted at $r$, let $w \neq r$ be an important vertex in $T$, and let $u$ be a child of $w$ such that $u$ is contained in some minimum vertex cover of $G_w$. If edge $w\pi(w)$ is in some minimum cut of $T$, then the edges incident to the leaves of any minimum vertex cover of $G_w$ are contained in some minimum cut: simply replace all the edges that are incident to the children of $w$ in a minimum cut that contains $w\pi(w)$ with the edges incident to the leaves corresponding to the desired minimum vertex cover of $G_w$. Since $u$ is contained in some minimum vertex cover of $G_w$, there is a minimum cut that contains the edge $wu$. Therefore, if we choose edge $w\pi(w)$ to be in the solution, then we can choose the edge $wu$ to be in the solution as well. If when we branch we choose to cut $uw$ whenever we cut $w\pi(w)$ then we say that we {\em favor} vertex $u$. Note that if we favor a vertex $u$, then by contrapositivity, if we decide not to cut $u$ in a branch, then we can assume that $w$ will not be cut as well in the same branch. This observation will be very useful when branching.
\end{observation}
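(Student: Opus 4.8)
The plan is to establish the claim by a local \emph{exchange} (or replacement) argument that leaves every edge of the cut outside the block of ``children-of-$w$'' edges untouched, while rewriting that block so that it matches the desired minimum vertex cover of $G_w$. First I would fix a minimum cut $E_{min}$ of $T$ containing the edge $w\pi(w)$, and isolate the portion of $E_{min}$ sitting in the star around $w$, namely $F_0 = \{wx \in E_{min} : x \text{ is a child of } w\}$. By the correspondence between vertex covers of $G_w$ and cuts of the leaf-requests at $w$ recorded in the Preliminaries, the children $x$ with $wx \in E_{min}$ form a vertex cover $C_0$ of $G_w$; in particular $|F_0| = |C_0| \ge \tau(G_w)$.

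Next, given any minimum vertex cover $C$ of $G_w$, I would form $E' = (E_{min} \setminus F_0) \cup \{wx : x \in C\}$. The size bound is immediate: $|E'| = |E_{min}| - |C_0| + |C| \le |E_{min}|$, since $|C| = \tau(G_w) \le |C_0|$. Hence, once I verify that $E'$ is again a cut, it must be a \emph{minimum} cut, and within the star at $w$ it consists of exactly the edges incident to the leaves of $C$.

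The one step that needs care — and the crux of the whole argument — is checking that $E'$ still separates every request, since I have deleted the edges $wx$ with $x \in C_0 \setminus C$. The key point is that $E'$ still contains $w\pi(w)$ (that edge is not incident to a child of $w$, so it was never removed), and because $w$ is important the subtree $T_w$ consists of nothing but $w$ together with its leaf-children; consequently cutting $w\pi(w)$ already disconnects all of $T_w$ from the rest of $T$. I would then split the requests into three types. Requests with both endpoints outside $T_w$ are cut by edges of $E_{min}$ lying outside $F_0$, which are untouched (no such path can traverse an edge to a degree-one leaf of $w$); requests with exactly one endpoint in $T_w$ are automatically cut by $w\pi(w)$, because every path leaving $T_w$ must traverse that edge; and requests with both endpoints inside $T_w$ are necessarily between two leaf-children of $w$ — here I would invoke the fact that in a reduced instance there is no request between $w$ and any of its children (Proposition~\ref{prop:reduced}(i), or the unit-request rule, Reduction Rule~\ref{red:0.3}). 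Each such internal request is an edge of $G_w$, and since $C$ is a vertex cover of $G_w$ at least one of its endpoints lies in $C$, so at least one of the two corresponding star-edges survives in $E'$; as the unique path between two leaf-children passes through $w$, the request is cut. Thus $E'$ is a cut, and the verification is complete.

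Finally, the two consequences fall out directly. Taking $C$ to be a minimum vertex cover of $G_w$ that contains $u$ (such a cover exists by hypothesis) yields a minimum cut $E'$ with $wu \in E'$, which licenses the branching convention of cutting $uw$ whenever we cut $w\pi(w)$ — what it means to \emph{favor} $u$. The contrapositive phrasing is then pure logic: ``cut $w\pi(w)$ implies cut $uw$'' is equivalent to ``do not cut $uw$ implies do not cut $w\pi(w)$,'' so in any branch where we decide against cutting $u$ we may assume that $w\pi(w)$ is kept as well.
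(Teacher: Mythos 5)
Your proposal is correct and follows essentially the same route as the paper: the paper's justification is exactly the exchange step you carry out --- replace the edges of a minimum cut containing $w\pi(w)$ that are incident to the children of $w$ with the edges corresponding to the desired minimum vertex cover of $G_w$, the size not increasing since the cut's star-edges at $w$ already induce a vertex cover of $G_w$. Your added case analysis verifying that the modified set is still a cut (requests crossing $w\pi(w)$, requests internal to $T_w$ covered by $C$, requests involving $w$ ruled out by the unit-request rule) merely makes explicit what the paper leaves implicit, so there is nothing further to flag.
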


\begin{observation}\rm
Let $T$ be a tree and let $w \in V(T)$ be an important vertex. Let $v \in G_w$, and recall that $deg_{G_w}(v)$ denotes the degree of $v$ in $G_w$. By Lemma~\ref{lem:minvc}, we can assume that the set of edges in $T_w$ that are contained in the solution that we are looking for corresponds to a minimum vertex cover of $G_w$. Since any minimum vertex cover of $G_w$ either contains $v$, or excludes $v$ and contains its neighbors, we can branch by cutting $v$ in the first side of the branch, and by cutting the neighbors of $v$ in $G_w$ in the second side of the branch. Note that by part (iv) of Proposition~\ref{prop:reduced}, and the fact that there is no request between a child and its parent (unit request rule), there must be at least one request between $v$ and another child of $w$, and hence, $deg_{G_w}(v) \geq 1$. \\
\end{observation}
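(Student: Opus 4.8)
The plan is to read the claimed branching as a direct translation of the vertex-cover dichotomy on $G_w$ into edge cuts, with Lemma~\ref{lem:minvc} doing the work of justifying that this translation loses no optimality. First I would invoke Lemma~\ref{lem:minvc} to fix a minimum cut $E_{min}$ for $R$ whose edges incident to the children of $w$ correspond to a \emph{minimum} vertex cover $C$ of $G_w$; here the edge $wx$ lies in $E_{min}$ exactly when the leaf $x$ lies in $C$. This reduces the task of deciding which child-edges of $w$ to cut to the task of choosing a minimum vertex cover of $G_w$, which is precisely the object the branching manipulates.

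Next I would apply the elementary property of vertex covers: for the fixed cover $C$ and the chosen vertex $v \in G_w$, either $v \in C$, or else $v \notin C$, in which case every edge $vx$ of $G_w$ must be covered by its other endpoint, forcing $N_{G_w}(v) \subseteq C$. These two cases are exhaustive. Cutting $v$ (i.e.\ putting the edge $wv$ in the solution) realizes the first case, while cutting each neighbor of $v$ in $G_w$ realizes the second; hence the two sides of the branch together cover every possibility for $C$, and by Lemma~\ref{lem:minvc} at least one of them is consistent with a minimum cut. This establishes that the branching is both correct and exhaustive.

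It then remains to verify $deg_{G_w}(v) \geq 1$, so that the branch is nontrivial and the second side actually cuts an edge. Here I would note that $v$ is a leaf attached to $w$, so a request between $v$ and its parent $w$ is ruled out by the unit-request rule (Reduction Rule~\ref{red:0.3}). Since the instance is reduced, Reduction Rule~\ref{red:3} applied with $u = v$ cannot fire, so $v$ must have a request to some vertex of $V(T_w)\setminus\{v\}$; as this cannot be $w$, it is a request to a sibling of $v$, that is, an edge of $G_w$ incident to $v$. (This is exactly the content recorded in part~(iv) of Proposition~\ref{prop:reduced}.) Therefore $deg_{G_w}(v)\geq 1$.

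The only genuinely non-routine step is the first one: the correctness of the entire branching hinges on Lemma~\ref{lem:minvc}, which guarantees the existence of a single minimum cut simultaneously realizing a minimum vertex cover at every important vertex. Everything after that — the cover dichotomy and the degree bound — is bookkeeping. So I would expect the main obstacle to be making precise that branching only over the \emph{minimum} vertex-cover patterns of $G_w$ (rather than over all vertex covers) is legitimate, which is exactly where Lemma~\ref{lem:minvc} is indispensable.
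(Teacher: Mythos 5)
Your proposal is correct and takes essentially the same route as the paper: both invoke Lemma~\ref{lem:minvc} to restrict the search to solutions whose edges at the children of $w$ realize a minimum vertex cover of $G_w$, apply the standard dichotomy that any cover either contains $v$ or contains all of $N_{G_w}(v)$, and then verify $deg_{G_w}(v)\geq 1$ from the instance being reduced. Your derivation of the degree bound from the inapplicability of Reduction Rules~\ref{red:0.3} and~\ref{red:3} is merely an unpacking of part~(iv) of Proposition~\ref{prop:reduced}, which the paper cites directly (your version even avoids the $\Delta(G_w)\leq 2$ hypothesis stated in that part, a minor tidiness rather than a different argument).
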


The above observation leads to the following branching rule:

\begin{branchrule}\label{branch:1}
Let $T$ be a tree, and let $w \in V(T)$ be an important vertex. If there exists a vertex $v \in G_w$ such that $deg_{G_w}(v) \geq 3$, then branch by cutting $v$ in the first side of the branch, and by cutting the neighbors of $v$ in $G_w$ in the second side of the branch. Cutting $v$ reduces the parameter $k$ by 1, and cutting the neighbors of $v$ in $G_w$ reduces $k$ by at least 3. Therefore, the number of leaves in the search tree of the algorithm, $L(k)$, satisfies the recurrence relation: $L(k) \leq L(k-1) + L(k-3)$.
\end{branchrule}

\section{The algorithm}
\label{sec:algo}
Let $(T, R, k)$ be a reduced instance of {\sc multicut}. The algorithm is a branch-and-search algorithm, and
its execution can be depicted by a search tree. The running time
of the algorithm is proportional to the number of root-leaf paths,
or equivalently, to the number of leaves in the search tree,
multiplied by the time spent along each such path, which will be polynomial in $k$. Therefore, the
main step in the analysis of the algorithm is to derive an upper
bound on the number of leaves $L(k)$ in the search tree. We shall assume that the instance $(T, R, k)$ is reduced before every branch of the algorithm. We shall also assume that the branches are considered in the listed order. In particular, when a branch is considered, $(T, R, k)$ is reduced and none of the branches in the previous section applies.

We can now assume from the previous section that for any important vertex $w$, we have $\Delta(G_w) \leq 2$, and hence, $G_w$ consists of a collection of disjoint paths and cycles.
Moreover, we can assume that, for any important vertex $w$, no child of $w$ has a cross request to $\pi(w)$ (if it exists). We draw another observation:

\begin{observation}
\label{obs:oddpath}
If for an important vertex $w$, $G_w$ contains a path $P$ of odd length whose length is more than 3, let $u$ be an endpoint of $P$ (i.e., a vertex of degree 1 in $P$). Observe that there exists exactly one minimum vertex cover $C_u$ of $P$ containing $u$. Therefore, by Lemma~\ref{lem:minvc}, if we decide to cut $u$, then we can cut $|C_u| = (|P|+1)/2 \geq 3$ edges between $w$ and the vertices in $C_u$. On the other hand, if $wu$ is kept then the neighbor of $u$ in $G_w$ is cut.
\end{observation}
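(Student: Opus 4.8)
The plan is to verify the three assertions that the observation bundles together: the uniqueness of the minimum vertex cover $C_u$ of $P$ containing the endpoint $u$, the size identity $|C_u| = (|P|+1)/2 \ge 3$ with its branching consequence, and the claim governing the kept case. I would label the vertices of $P$ as $u = x_0, x_1, \ldots, x_{|P|}$ in the order in which they appear along the path, so $P$ has $|P|+1$ vertices and $|P|$ edges; since $|P|$ is odd, $|P|+1$ is even.

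The heart of the matter is a combinatorial fact about paths, which I would isolate first: \emph{a path on an odd number of vertices has a unique minimum vertex cover, namely the set of its interior even-indexed vertices, and in particular this cover excludes both endpoints.} I would prove this by induction on the number of vertices. In the inductive step, for a path $y_1 y_2 \cdots y_{2m+1}$ one notes that an endpoint $y_1$ cannot lie in a minimum vertex cover, since forcing $y_1$ in would still leave the even sub-path $y_2 \cdots y_{2m+1}$ to be covered, whose minimum cover already consumes the whole optimal budget $m$, giving a cover of size at least $m+1$ and contradicting optimality. Hence $y_2$ must be chosen, and the remainder reduces to the odd sub-path $y_3 \cdots y_{2m+1}$, to which the induction hypothesis applies, forcing the cover uniquely.

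Applying this to $P$ is then bookkeeping. If we place $u = x_0$ in the cover, edge $x_0x_1$ is covered and the remaining edges form the sub-path $x_1 x_2 \cdots x_{|P|}$ on $|P|$ vertices, an odd number; by the fact above its minimum vertex cover is the forced set $\{x_2, x_4, \ldots, x_{|P|-1}\}$. Therefore $C_u = \{x_0, x_2, \ldots, x_{|P|-1}\}$ is the unique minimum vertex cover of $P$ containing $u$, and $|C_u| = 1 + (|P|-1)/2 = (|P|+1)/2$; since $P$ has odd length greater than $3$ we have $|P| \ge 5$ and hence $|C_u| \ge 3$. To connect this to the cut, I would use that $\Delta(G_w) \le 2$, so $G_w$ is a disjoint union of paths and cycles and $P$ is one of its components: any minimum vertex cover of $G_w$ restricts on $P$ to a minimum vertex cover of $P$, and conversely combining $C_u$ on $P$ with minimum covers of the other components yields a minimum vertex cover of $G_w$. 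By Lemma~\ref{lem:minvc} some minimum cut realizes such a cover through the edges incident to the children of $w$, so the decision to \emph{cut} $u$ forces, by the uniqueness just established, the whole of $C_u$, and we cut the $|C_u| = (|P|+1)/2 \ge 3$ edges joining $w$ to the vertices of $C_u$.

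For the kept case I would simply observe that if edge $wu$ is not cut, then $u$ is excluded from the cover on the component $P$, so the single path-edge $x_0x_1$ incident to $u$ must be covered by its other endpoint $x_1$, the unique neighbor of $u$ in $G_w$; equivalently the edge $wx_1$ is cut. The only genuinely nontrivial point, and the step I would spend the most care on, is the uniqueness of the minimum vertex cover of an odd path subject to containing a prescribed endpoint; once that is in hand, the rest is a routine application of the component structure of $G_w$ together with Lemma~\ref{lem:minvc}.
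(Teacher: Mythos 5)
Your proposal is correct and follows the same route the paper takes: the paper states Observation~\ref{obs:oddpath} without a separate proof, treating the uniqueness of the minimum vertex cover $C_u$ containing the endpoint $u$ and the size identity $|C_u| = (|P|+1)/2 \geq 3$ as evident, and then invokes Lemma~\ref{lem:minvc} exactly as you do. Your inductive argument that an even-length path (odd number of vertices) has a unique minimum vertex cover consisting of its interior even-indexed vertices, together with the component-wise restriction of minimum vertex covers in $G_w$, is precisely the routine justification the paper leaves implicit, and all of your steps check out.
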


The above observation leads to the following branching rule:

\begin{branchrule}\label{branch:3}
Let $T$ be a tree, and let $w \in V(T)$ be an important vertex such that $\Delta(G_w) \leq 2$. If there exists a path $P$ in $G_w$ of odd length such that $|P| > 3$, let $u$ be an endpoint of $P$ and let $C_{u}$ be the (unique) minimum vertex cover of $P$ containing $u$. Branch by cutting the vertices in $C_{u}$ in the first side of the branch, and by cutting the neighbor of $u$ in $P$ in the second side of the branch. Since $|C_u| = (|P|+1)/2 \geq 3$, $L(k)$ satisfies the recurrence relation: $L(k) \leq L(k-3) + L(k-1)$.
\end{branchrule}

Now for any important vertex $w$, $G_w$ consists of a collection of disjoint cycles and paths of lengths 3 or 1 (i.e., edges). Note that every vertex in $G_w$ is contained in some minimum vertex cover of $G_w$. Let $T$ be a tree rooted at $r$, and let $w \in T$ be an important vertex that is farthest from $r$. We distinguish the following cases when branching. The cases are considered in the listed order, and we shall assume that $T$ is reduced and none of BranchRule~\ref{branch:1} and BranchRule~\ref{branch:3} is applicable before any of the cases.

\begin{case}\label{case:0}
Vertex $w$ has a cross request to a non-leaf sibling $w'$.
\end{case}

In this case at least one of $w, w'$ must be cut. We branch by cutting $w$ in the first side of the branch, and cutting $w'$ in the second side of the branch. Note that by part (iii) of Proposition~\ref{prop:reduced}, the size of a minimum vertex cover in $G_w$ is at least 1, and similarly for $G_{w'}$ because $w'$ is a non-leaf vertex. Moreover, a minimum vertex cover for each of $G_w$ and $G_{w'}$ can be computed in polynomial time since both graphs have maximum degree at most 2 (note that by the choice of $w$, $w'$ is an important vertex as well). Therefore, in the first side of the branch we end up cutting the edges corresponding to a minimum vertex cover of $G_w$, which reduces the parameter further by at least 1. Similarly, we end up reducing the parameter further by at least 1 in the second side of the branch. Therefore, we have $L(k) \leq 2L(k-2)$ in this case.

\begin{case}\label{case:2}
There exists a child $u$ of $w$ such that $deg_{G_w}(u) =2$ and $u$ has a cross request.
\end{case}

We favor $u$. Note that since we can assume that the solution contains a minimum vertex of $G_w$, we can branch by cutting $u$ in the first side of the branch, and by keeping $u$ and cutting the two neighbors of $u$ in $G_w$ in the second side of the branch.

If the cross request is between $u$ and an uncle $w'$ of $u$, then we branch as follows. In the first side of the branch we cut $u$. In the second side of the branch we keep edge $uw$, and cut the two neighbors of $u$ in $G_w$. Since $u$ is not cut and $u$ is favored, $w$ is not cut as well, and hence $w'$ must be cut. Therefore, $L(k)$ in this case satisfies the recurrence relation $L(k) \leq L(k-3) + L(k-1)$.

If the cross request is between $u$ and a cousin $u'$ of $u$, let $w'=\pi(u')$ and note that $\pi(w) = \pi(w')$. We favor $u'$; thus if $u'$ is not cut then $w'$ is not cut as well. In this case we branch as follows. In the first side of the branch we cut $u$. In the second side of the branch $uw$ is kept and we cut the two neighbors of $u$ in $G_w$. Since in the second side of the branch $uw$ is kept, $w\pi(w)$ is kept as well, and $u'$ must be cut (otherwise, $w'$ is not cut as well because $u'$ is favored) since $(u, u') \in R$. Therefore, $L(k)$ in this case satisfies the recurrence relation $L(k) \leq L(k-1) + L(k-3)$.

\begin{case}\label{case:2.5}
There exists a child $u$ of $w$ such that $u$ is an endpoint of a path of length 3 in $G_w$ and $u$ has a cross request.
\end{case}

Let the path containing $u$ in $G_w$ be $P=(u, x, y, z)$. We favor $u$. Note that since we can assume that the solution contains a minimum vertex of $G_w$, we can branch by cutting $u$ in the first side of the branch, and in this case $y$ can be cut as well, and by cutting $x$ in the second side of the branch.

If the cross request is between $u$ and an uncle $w'$ of $u$, then we branch as follows. In the first side of the branch we cut $u$ and $y$. In the second side of the branch we keep $uw$ and cut $x$. Since $u$ is not cut in the second side of the branch and $u$ is favored, $w$ is not cut as well, and hence $w'$ must be cut. Therefore, $L(k)$ in this case satisfies the recurrence relation $L(k) \leq 2L(k-2)$.

If the cross request is between $u$ and a cousin $u'$ of $u$, let $w'=\pi(u')$ and note that $\pi(w) = \pi(w')$. We favor $u'$; thus if $u'$ is not cut then $w'$ is not cut as well. In this case we branch as follows. In the first side of the branch we cut $u$ and $y$, and in the second side of the branch $uw$ is kept and we cut $x$. Since $uw$ is kept in the second side of the branch, $w\pi(w)$ is kept as well, and $u'$ must be cut (otherwise $w'$ is not cut) since $(u, u') \in R$. Therefore, $L(k)$ in this case satisfies the recurrence relation $L(k) \leq 2L(k-2)$.

\begin{case}\label{case:1.5}
There exists a child $u$ of $w$ such that $u$ has a cross request to a non-leaf uncle $w'$.
\end{case}

Let $v$ be the neighbor of $u$ in $G_w$, and note that $uv$ must be an isolated edge in $G_w$, and hence, exactly one of $u, v$ is in any minimum vertex cover of $G_w$. We favor $u$. We branch by cutting $u$ in the first side of the branch, and cutting $v$ in the second side of the branch. In the second side of the branch $wu$ is kept, and so is $w\pi(w)$. Since $(u, w') \in R$, $w'$ must be cut. By part (iii) of Proposition~\ref{prop:reduced}, the size of a minimum vertex cover of $G_{w'}$ is at least 1, and by the choice of $w$, $w'$ is a farthest vertex from the $r$, and hence $\Delta(G_{w'}) \leq 2$. Therefore, a minimum vertex cover for $G_{w'}$ has size at least 1 and can be computed in polynomial time. It follows that the parameter is reduced by at least 3 in the second side of the branch. We have $L(k) \leq L(k-1) + L(k-3)$ in this case.

Let us summarize what we have at this point. If all the previous cases do not apply, then we can assume that, for any important node $w$ that is farthest from the root $r$ of $T$, no child of $w$ is of degree 2 in $G_w$ and no endpoint of a path of length 3 in $G_w$ has any cross requests. Therefore, no child of $w$ that belongs to a cycle or a path of odd length $\geq 3$ in $G_w$ has any cross requests. The only children of $w$ that may have cross requests are the endpoints of the isolated edges in $G_w$. Moreover, if $w$ has a cross request then it must be to a leaf-sibling, and if a child of $w$ has a cross request to an uncle, then it must be to a leaf-uncle.

\begin{case}\label{case:3}
There exists a child $u$ of $w$ such that $u$ has at least 2 cross requests.
\end{case}

By the above discussion we have $deg_{G_w}(u) =1$. Let $v$ be the neighbor of $u$ in $G_w$, and note that exactly one of $u, v$ is in any minimum vertex cover of $G_w$. Let $u'$ and $u''$ be two vertices that $u$ has cross requests to. We distinguish the following subcases:

\begin{subcase}\label{subcase:31}
$\pi(u') \neq \pi(u'')$ or $\pi(u') = \pi(u'') = \pi(w)$.
\end{subcase}

We favor vertex $u$ and the vertices in $\{u', u''\}$ that are not children of $\pi(w)$, and branch as follows.

In the first side of the branch we cut $v$ and keep edge $wu$. Since edge $uw$ is kept and $u$ is favored, edge $w\pi(w)$ is kept as well. Since the vertices in $\{u', u''\}$ that are not children of $\pi(w)$ are favored, $u'$ and $u''$ are cut. In the second side of the branch we cut $u$. This gives $L(k) \leq L(k-1) + L(k-3)$.

\begin{subcase}\label{subcase:32}
$\pi(u') = \pi(u'')= w'$.
\end{subcase}

If there exists a minimum vertex cover of $G_{w'}$ containing both $u'$ and $u''$, then we favor $\{u', u''\}$ and branch as follows. In the first side of the branch we cut $v$. In this case $wu$ is kept, and so is $w\pi(w)$. Moreover, $u'$ and $u''$ are cut. In the second side of the branch $u$ is cut. This gives $L(k) \leq L(k-1) + L(k-3)$.

If there does not exist a minimum vertex cover of $G_{w'}$ containing both $u'$ and $u''$, then since $T$ is reduced and $w'$ is an important vertex, by part (v) of Proposition~\ref{prop:reduced}, $u'$ and $u''$ must be neighbors in $G_{w'}$. We favor $u$ and branch as follows. In the first side of the branch we cut $v$ and keep $wu$, and in the second side of the branch we cut $u$. When we keep $wu$ in the first side of the branch $w\pi(w)$ is kept as well. Since at least two edges in $\{\pi(w')w', w'u', w'u''\}$ must be cut (since $(u, u'), (u, u''), (u', u'') \in R$ and $uw, w\pi(w)$ are kept), it is safe to cut edges $w'\pi(w')$ and any of the two edges $w'u', w'u''$. This gives $L(k) \leq L(k-1) + L(k-3)$.

We can assume henceforth that every child of $w$ has at most 1 cross request.

\begin{case}\label{case:4}
Vertex $w$ has a cross request to a leaf sibling $w'$, and the size of a minimum vertex cover of $G_w$ is at least 2.
\end{case}

In this case at least one of the edges $w\pi(w), w'\pi(w)$ must be cut. We branch by cutting $w$ in the first side of the branch, and cutting $w'$ in the second side of the branch. Since the size of a minimum vertex cover of $G_w$ is at least 2, in the first side of the branch we can cut the edges corresponding to a minimum vertex cover of $G_w$, which reduces the parameter further by at least 2. Therefore, we have $L(k) \leq L(k-3) + L(k-1)$ in this case.

Now we can assume that if an important vertex $w$ has a cross request to a leaf sibling, then $G_w$ consists of a single edge.

\begin{case}\label{case:5}
Vertex $w$ has a cross request to a leaf sibling $w'$, and either $w$ has a request to a sibling $w'' \neq w'$ or a child $u$ of $w$ has a cross request to a vertex other than $w'$.
\end{case}

Suppose that $w$ has a cross request to a sibling $w'' \neq w'$. Then branch by cutting $w$ in the first side of the branch, and cutting both $w'$ and $w''$ in the second side of the branch. Observing that when $w$ is cut the parameter is reduced further by 1 due to cutting one of the two children of $w$ (arbitrarily chosen), we obtain $L(k) \leq L(k-3) + L(k-1)$ in this case.

If $u$ has a cross request to a sibling $w'' \neq w'$ of $w$, then we favor $u$ and branch by cutting $u$ in the first
side of the branch, and keeping $uw$ and cutting $v$ in the second side of the branch. In the second side of the branch, $w$ is kept (since $u$ is kept and is favored), and hence both $w'$ and $w''$ must be cut. We obtain $L(k) \leq L(k-3) + L(k-1)$ in this case. Similarly, if $u$ has a cross request to a cousin $x$, then we favor both $u$ and $x$. In the first side of the branch $u$ is cut, and in the second side of the branch $v, w', x$ are all cut. We obtain $L(k) \leq L(k-3) + L(k-1)$.

\begin{case}\label{case:6}
Vertex $w$ has a cross request to a leaf sibling $w'$ and $w'$ has a request to a vertex in $V(T_{\pi(w')})$ that is not a child of $w$.
\end{case}

If $w'$ has a request to a sibling $w'' \neq w'$, then we branch by cutting $w'$ in the first side of the branch and cutting both $w$ and $w''$ in the second side of the branch. Observing that when $w$ is cut the parameter is further reduced by 1 due to cutting one of the two children of $w$ (arbitrarily chosen), we obtain $L(k) \leq L(k-1) + L(k-3)$ in this case.

If $w'$ has a request to a vertex $x$ that is a nephew of $w'$, then we favor $x$. We branch by cutting $w'$ in the first side of the branch, and cutting both $w$ and $x$ in the second side of the branch. Observing that when $w$ is cut the parameter is further reduced by 1, we obtain $L(k) \leq L(k-1) + L(k-3)$ in this case.

Now if an important vertex $w$ that is farthest from $r$ has a cross request to a vertex $w'$ in $T_{\pi(w)}$, then $w'$ must be a leaf-sibling of $w$ (note that by Case~\ref{case:1.5} and by symmetry, $w$ does not have a cross request to a nephew) and: (1) $w$ has exactly two children, (2) $w$ has no request to any vertex in $V(T_{\pi(w})$ except to $w'$, (3) both children of $w$ have cross requests only to $w'$ (note that by part (vii) of Proposition~\ref{prop:reduced} both children of $w$ must have cross requests in this case), and (4) $w'$ has no request to any vertex in $V(T_{\pi(w')}) \setminus V(T_w)$. We call such a set of four vertices $\{w, w', u, v\}$ a {\em special quadruple}. The structure of a special quadruple is depicted in Figure~\ref{fig:quadruple}.

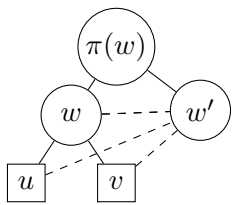
\begin{figure}
\begin{center}
\begin{tikzpicture}[>=stealth,looseness=.5,auto]
\matrix [matrix of nodes,
  column sep={0.6cm,between origins},
  row sep={0.9cm,between origins},
  nodes={circle, draw, inner sep=1.3pt}]
  {
    & & \important{p}{$\pi(w)$} \\
    & \important{w}{$w$} & & \important{w'}{$w'$} \\
    \leaf{u}{$u$} && \leaf{v}{$v$} \\
  };
  \tikzstyle{every node}=[font=\tiny\itshape]
  \draw (p) -- (w);
  \draw (p) -- (w');
  \draw[dashed] (w) -- (w');
  \draw (u) -- (w);
  \draw (v) -- (w);
  \draw [dashed] (u) -- (w');
  \draw [dashed] (w) -- (w');
  \draw [dashed] (v) -- (w');
\end{tikzpicture}
\end{center}

\caption{A special quadruple $\{w, w', u, v\}$.} \label{fig:quadruple}
\end{figure}

\begin{case}\label{case:70}
A leaf-sibling $w'$ of $w$ that is not contained in a special quadruple has at least three requests to leaf siblings or nephews.
\end{case}

If $w'$ has at least three requests to leaf siblings $w_1, w_2, w_3$, then we can branch by cutting $w'$ in the first side of the branch, and cutting all of $w_1, w_2, w_3$ in the second side of the branch. This gives $L(k) \leq L(k-1) + L(k-3)$.

If $w'$ has two requests to nephews $x$ and $y$ such that $x$ and $y$ have the same parent $w''$ and $(x, y) \in R$, let $z$ be a vertex other than $x$ and $y$ that $w'$ has a request to; if $z$ is a nephew of $w'$ then favor it. Branch by cutting $w'$ in the first side of the branch, and cutting $w''$, one of $x, y$, and $z$ in the second side of the branch (note that since Case~\ref{case:2} does not apply, $d_G(z) =1$). The reason why we can cut $w''$ in the second side of the branch follows from the fact that we would need to cut both $x$ and $y$ otherwise. This gives $L(k) \leq L(k-1) + L(k-3)$.

Finally, if the above does not apply, then we can favor all nephews of $w'$ that $w'$ has requests to, and branch by cutting $w'$ in the first side of the branch, and by cutting the siblings and nephews that $w'$ has requests to in the second side of the branch. This gives $L(k) \leq L(k-1) + L(k-3)$.

Now we can assume that for any important vertex $w$, any leaf-sibling of $w$ is either contained in a special quadruple, or has at most two requests to vertices in $V(T_{\pi(w)})$.

\begin{case}\label{case:7}
There exist two edges $uv$ and $xy$ in $G_w$ such that all vertices $u, v, x, y$ have cross requests.
\end{case}

Note that by Case~\ref{case:3}, each of $u, v, x, y$ has exactly one cross request. Moreover, by Case~\ref{case:1.5}, if there is a cross request from any of $u, v, x, y$ to an uncle, then the uncle is a leaf uncle.  Suppose that $u$ has a cross request to $u'$, $v$ to $v'$, $x$ to $x'$, and $y$ to $y'$. We distinguish the following subcases.

\begin{subcase}
\label{subcase:71}
Both $u$ and $v$ (or $x$ and $y$) have requests to the same uncle (i.e., $u'=v'$ is a sibling of $w$).
\end{subcase}

Note that in this case $u'$ is a leaf uncle. Branch by cutting $u'$ in the first side of the branch, and keeping $u'$ and cutting $w$ and the edges between $w$ and the vertices of any minimum vertex cover of $G_w$ in the second side of the branch (otherwise, if $w$ is kept then both $u$ and $v$ would need to be cut). This gives $L(k) \leq L(k-1) + L(k-3)$.

\begin{subcase}
\label{subcase:72}
$u$ and a vertex in $x, y$, say $x$, have requests to the same uncle, and $v$ and $y$ have requests to the same uncle (i.e., $u'=x'$ and $v'=y'$, where both $u'$ and $v'$ are leaf uncles of $u$).
\end{subcase}
In this case the only requests involving $u, v, x, y, w, u', v'$ are $\{(u, v), (x, y), (u', u), (u', x), (v', v), (v', y)\}$, which can be cut by cutting three edges (e.g., cutting $w, u, x$). Note that if none of $u', w, v'$ is cut, then four edges are needed to cut the above requests; similarly, if two vertices in $u', w, v'$ are cut then four edges are needed to cut the above requests. Therefore, we can conclude that there exists a minimum cut that cuts exactly one vertex in $u', w, v'$ (if a minimum cut cuts two or more vertices from $u', w, v'$ then such a cut can be replaced by another cut of the same cardinality that cuts $w$, $\pi(w)$, $u$, and $v$; similarly if a cut does not cut any of $u', v', w$).  Branch by cutting $w$ and the vertices of any minimum vertex cover of $G_w$ in the first side of the branch, cutting $u', v, y$ and keeping $v'$ and $w$ in the second side of the branch, and cutting $v', u, x$ and in the third side of the branch. This gives $L(k) \leq 3L(k-3)$.

\begin{subcase}
\label{subcase:73}
Vertex $u$ and a vertex in $x, y$, say $x$, have requests to the same uncle.
\end{subcase}

Since $\{y, v\}$ is contained in some minimum vertex cover of $G_w$, we can favor $\{y, v\}$. Since any minimum vertex cover of $G_w$ must contain either $\{u, x\}$, $\{u, y\}$,
$\{v, x\}$, or $\{v, y\}$, by Lemma~\ref{lem:minvc}, there exists a minimum cut that either cuts $u$ and $x$, or $u$ and $y$, or $v$ and $x$, or $v$ and $y$. Therefore, we can branch in a 4-way branch by cutting $u$ and $x$ in the first side of the branch, $u$ and $y$ in the second side of the branch, $v$ and $x$ in the third side of the branch, and $v$ and $y$ in the fourth side of the branch. Since $y, v$ are favored, in the first, second, and third sides of the branch $w\pi(w)$ is kept. Observe the following. First, if a vertex in $u, v, x, y$ has a request to a leaf uncle, then in any of the first three sides of the branch in which the vertex is not cut the uncle must be cut. Observe also that in the first side of the branch, if the two vertices $v, y$ that are not cut have cross requests to two cousins $v', y'$, respectively, such that $(v', y') \in R$, then $\pi(v')$ can be cut (otherwise, both $v'$ and $y'$ need to be cut) in addition to one of $v', y'$; if this is not the case then we can favor $\{y', v'\}$. Based on the above, the parameter is reduced by 4 in the first side of the branch, 4 in the second side of the branch, 4 in the third side of the branch, and 2 in the fourth side of the branch. We get
$L(k) \leq 3L(k-4) + L(k-2)$.

\begin{subcase}
\label{subcase:74}
No two requests from $u, v, x, y$ go to the same uncle.
\end{subcase}

Similarly to Subcase~\ref{subcase:73}, we can favor two vertices in $u, v, x, y$, chosen arbitrarily, say $v, y$, and branch in a 4-way branch by cutting $u$ and $x$ in the first side of the branch, $u$ and $y$ in the second side of the branch, $v$ and $x$ in the third side of the branch, and $v$ and $y$ in the fourth side of the branch. Since $y, v$ are favored, in the first, second, and third sides of the branch $w\pi(w)$ is kept. By drawing the same observations as in Subcase~\ref{subcase:73}, we conclude that the first three sides of the branch result in a reduction of the parameter by a value of at least 4, whereas the fourth side of the branch results in a reduction of the parameter by a value of at least 2. We get $L(k) \leq 3L(k-4) + L(k-2)$.

The following proposition follows from the inapplicability of the above cases plus the fact that $T$ is reduced:

\begin{proposition}
\label{prop:properties}
Let $T$ be a reduced tree with root $r$, and let $w \neq r \in T$ be an important vertex that is farthest from $r$. If none of BranchRule~\ref{branch:1}, BranchRule~\ref{branch:3} and the above cases applies, then the following hold true:

\begin{itemize}

\item[(i)] For every child $w'$ of $\pi(w)$ (i.e., sibling of $w$ or $w$ itself) that is an important vertex, $G_{w'}$ consists of disjoint edges, length-3 paths, and cycles. No vertex that is contained in a cycle or a length-3 path in $G_{w'}$ has any cross requests, and every endpoint of an edge in $G_{w'}$ has at most one cross request.

\item[(ii)] For every child $w'$ of $\pi(w)$ that is an important vertex, there exist exactly two children $u, v$ of $w'$ such that $(u, v) \in R$ and both $u$ and $v$ have cross requests.

\item[(iii)] Every leaf child $w'$  of $\pi(w)$ that is not contained in a special quadruple has at least one request, and at most two requests, to vertices in $V(T_{\pi(w)})$ that are either leaf siblings or nephews of $w'$.

\item[(iv)] Every non leaf child of $\pi(w)$ that is not contained in a special quadruple has no cross requests.
\end{itemize}
\end{proposition}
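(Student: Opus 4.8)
The plan is to prove the four items separately, in each case deriving the claimed structure from the inapplicability of a specific reduction rule, branching rule, or case, together with the properties guaranteed by Proposition~\ref{prop:reduced}. Throughout, I will use the fact that any important sibling $w'$ of $w$ is itself an important vertex farthest from $r$, so every branching rule and case stated for $w$ applies verbatim to $w'$.

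For part (i), I would first argue that $\Delta(G_{w'}) \le 2$, since otherwise BranchRule~\ref{branch:1} would apply; hence $G_{w'}$ is a disjoint union of paths and cycles. By part (v) of Proposition~\ref{prop:reduced} there is no even-length path, and the inapplicability of BranchRule~\ref{branch:3} rules out odd paths of length $> 3$, leaving only edges, length-$3$ paths, and cycles. For the cross-request claims, I would note that every vertex lying on a cycle, and every internal vertex of a length-$3$ path, has degree $2$ in $G_{w'}$; were such a vertex to carry a cross request, Case~\ref{case:2} would apply. An endpoint of a length-$3$ path carrying a cross request would trigger Case~\ref{case:2.5}. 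Finally, the inapplicability of Case~\ref{case:3} gives that each child, in particular each endpoint of an isolated edge, carries at most one cross request.

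For part (ii), the existence of the pair $(u,v)$ is the crux and uses part (vii) of Proposition~\ref{prop:reduced}, which asserts that no minimum vertex cover of $G_{w'}$ cuts all cross requests of $V(T_{w'})$. By part (i), cross requests emanate only from endpoints of isolated edges, and for each isolated edge a minimum vertex cover may freely contain either endpoint independently. Thus, if every isolated edge had at most one endpoint carrying a cross request, one could assemble a minimum vertex cover containing exactly those endpoints, cutting all cross requests and contradicting part (vii); hence some isolated edge $uv$ has both endpoints carrying cross requests. Uniqueness (the ``exactly two children'' claim) then follows because the existence of two such edges would make Case~\ref{case:7} applicable.

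Parts (iii) and (iv) record where the requests of leaf children and the cross requests of non-leaf children may go once the cases are exhausted, and I would handle them by elimination. For a leaf child $w'$, part (ii) of Proposition~\ref{prop:reduced} forces at least one request from $w'$ into $V(T_{\pi(w)}) \setminus \{w'\}$; such a request cannot go to $\pi(w)$ by part (i) of Proposition~\ref{prop:reduced}, and a request to a \emph{non-leaf} sibling would, from that sibling's viewpoint, be a cross request to a leaf sibling, which by Cases~\ref{case:4}--\ref{case:6} forces $w'$ into a special quadruple; so, for $w'$ outside any special quadruple, the target is a leaf sibling or a nephew, and the upper bound of two such requests is exactly the negation of Case~\ref{case:70}. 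Symmetrically, for a non-leaf child $w''$, a cross request of $w''$ to a non-leaf sibling is excluded by Case~\ref{case:0}, to a leaf sibling by Cases~\ref{case:4}--\ref{case:6} (unless a special quadruple forms), and to a nephew by Case~\ref{case:1.5} together with the symmetry observation preceding Case~\ref{case:70}. The main obstacle I anticipate is the bookkeeping in part (ii): converting the vertex-cover statement of part (vii) into the combinatorial existence of a single doubly-cross-requested edge, and then cleanly isolating the special-quadruple exceptions from the generic configurations in parts (iii) and (iv).
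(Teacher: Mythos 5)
Your proposal is correct and follows essentially the same route as the paper's own proof: part (i) via $\Delta(G_{w'})\le 2$, Proposition~\ref{prop:reduced}(v), BranchRule~\ref{branch:3}, and Cases~\ref{case:2}, \ref{case:2.5}, \ref{case:3}; part (ii) via Proposition~\ref{prop:reduced}(vii) plus uniqueness from Case~\ref{case:7} (your explicit ``assemble a minimum vertex cover from the singly-requested endpoints'' step is exactly what the paper leaves implicit); and parts (iii) and (iv) by the same elimination through Proposition~\ref{prop:reduced}(i)--(ii), Cases~\ref{case:0}, \ref{case:1.5}, \ref{case:4}--\ref{case:6}, \ref{case:70}, and the symmetry remark preceding Case~\ref{case:70}. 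There is no substantive difference in decomposition or key lemmas, and no gap beyond those already present in the paper's own terse write-up.
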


\begin{proof}

\begin{itemize}

\item[(i)] We know that $\Delta(G_{w'}) \leq 2$. Therefore, $G_{w'}$ consists of disjoint paths and cycles. By part (v) of Proposition~\ref{prop:reduced}, $G_{w'}$ contains no path of even length, and by BranchRule~\ref{branch:3}, $G_{w'}$ contains no path of odd length $\geq 5$. Therefore, $G_{w'}$ consists of disjoint edges, length-3 paths, and cycles. By Case~\ref{case:2}, no vertex in $G_{w'}$ of degree 2 has a cross request, and hence the only vertices in $G_{w'}$ that can have cross requests are endpoints of paths (or disjoint edges). By Case~\ref{case:2.5}, no endpoint of a length-3 path in $G_{w'}$ has a cross request, and hence no vertex of a length-3 path has a cross request. By Case~\ref{case:3}, no vertex in $G_{w'}$ has two cross requests, and hence every endpoint of a disjoint edge has at most one cross request. The statement follows.

\item[(ii)] By part (i) above, the only vertices in $G_{w'}$ that can have cross requests are endpoints of disjoint edges. By part (vii) of Proposition~\ref{prop:reduced}, there is no minimum vertex cover of $G_{w'}$ that cuts all cross requests. Therefore, there must exist at least one disjoint edge in $G_{w'}$ whose both endpoints have cross requests. By Case~\ref{case:7}, such an edge must be unique. The statement follows.

\item[(iii)] The fact that a leaf child of $w'$ must have at least one cross request follows from part (ii) (and part (i)) of Proposition~\ref{prop:reduced}. By Case~\ref{case:70}, no leaf child of $w'$ that is not contained in a special quadruple can have more than 2 cross requests. The statement follows.

\item[(iv)] Let $w'$ be a non leaf child of $\pi(w)$ that has a cross request to a vertex $w''$, and we show that $w'$ must be contained in a special quadruple. By Case~\ref{case:0}, the cross request from $w'$ must be to a leaf sibling. By Case~\ref{case:4}, the size of a minimum vertex cover of $G_{w'}$ is exactly 1 (note that every vertex in $G_{w'}$ has degree at least 1), and hence $G_{w'}$ consists of a single edge $uv$. By part (ii) of the current proposition, both $u$ and $v$ have cross requests, and by Case~\ref{case:1.5}, the requests from $u$ and $v$ must be to leaf uncles. Also, by Case~\ref{case:3}, each of $u$ and $v$ has exactly one cross request. By Case~\ref{case:5}, $w'$ must have exactly one cross request to $w''$ and the cross requests from both of $u$ and $v$ must be to $w''$. Finally, by Case~\ref{case:6}, the cross requests from $w''$ must be only to $w$, $u$, and $v$. The statement follows.

\end{itemize}
\end{proof}

\begin{definition}\rm
\label{def:auxiliary1}
Let $T$ be a reduced tree with root $r$, and let $w \neq r \in T$ be an important vertex that is farthest from $r$. Suppose that none of BranchRule~\ref{branch:1}, BranchRule~\ref{branch:3}, or the above Cases applies. We define the auxiliary graph $G^*_{\pi(w)}$ as follows. The vertices of $G^*_{\pi(w)}$
are the leaf children and the grandchildren of $\pi(w)$ that are not contained in any special quadruple.  Two vertices $x$ and $y$ in $G^*_{\pi(w)}$ are adjacent if and only if $(x, y) \in R$. Note that the edges in $G^*_{\pi(w)}$ correspond to either a request between two grandchildren of $\pi(w)$ that have the same parent, a request between two leaf children of $\pi(w)$, or a request between a leaf-child and a grandchild of $\pi(w)$.
\end{definition}

The following proposition is the dual of Proposition~\ref{prop:reduced}:

\begin{proposition}
\label{prop:propertiesbranching}
Let $T$ be a reduced tree with root $r$, and let $w \in T$, where $\pi(w) \neq r$, be an important vertex that is farthest from $r$. Suppose that none of BranchRule~\ref{branch:1}, BranchRule~\ref{branch:3}, or the above Cases applies. Consider the graph $G^*_{\pi(w)}$. Then the following are true:

\begin{itemize}
\item [(a)] $\Delta(G^*_{\pi(w)}) \leq 2$, and hence $G^*_{\pi(w)}$ consists of disjoint paths and cycles.

\item[(b)] For every path $P$ in $G^*_{\pi(w)}$ such that at least one endpoint of $P$ is a grandchild of $\pi(w)$, there exists a minimum cut that cuts the vertices in some minimum vertex cover of $P$.

\item[(c)] For every path $P$ and every cycle $C$ in $G^*_{\pi(w)}$, there exists a minimum cut $C_{min}$ such that the number of edges in $C_{min}$ that are incident on the vertices in $P$ or their parents in case these vertices are grandchildren of $\pi(w)$, is equal to the size of a minimum vertex cover of $P$, and the number of edges in $C_{min}$ that are incident on the vertices in $C$ or their parents in case these vertices are grandchildren of $\pi(w)$, is equal to the size of a minimum vertex cover of $C$.

\end{itemize}
\end{proposition}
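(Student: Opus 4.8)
The plan is to treat $G^*_{\pi(w)}$ as a generalization of the single-vertex auxiliary graph $G_w$ from the Preliminaries, and to lift the vertex-cover/cut correspondence established there (and in Lemma~\ref{lem:minvc}) from one important vertex to the whole family of children of $\pi(w)$. Part (a) I would settle by a direct degree count based on Proposition~\ref{prop:properties}. Let $z$ be a vertex of $G^*_{\pi(w)}$. If $z$ is a leaf child of $\pi(w)$, then part (iii) of Proposition~\ref{prop:properties} gives it at most two requests to leaf siblings or nephews, so $\deg_{G^*_{\pi(w)}}(z)\le 2$. If $z$ is a grandchild of $\pi(w)$, say a child of the important vertex $w'$, then its $G^*_{\pi(w)}$-neighbors are either its neighbors inside $G_{w'}$ or the (at most one, by Case~\ref{case:3}) vertex to which it has a cross request; by part (i) of Proposition~\ref{prop:properties}, if $z$ has a cross request at all then $z$ is the endpoint of an isolated edge of $G_{w'}$, so $\deg_{G_{w'}}(z)=1$ and the total is at most $2$, and otherwise $\deg_{G^*_{\pi(w)}}(z)=\deg_{G_{w'}}(z)\le 2$. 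Hence $\Delta(G^*_{\pi(w)})\le 2$ and the graph decomposes into disjoint paths and cycles.

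For parts (b) and (c) the plan is an exchange argument starting from a minimum cut that is already normalized by Lemma~\ref{lem:minvc}, so that inside every $G_{w'}$ its cut edges form a minimum vertex cover of $G_{w'}$. First I would record the easy (upper-bound) direction: for any component $H$ of $G^*_{\pi(w)}$ and any vertex cover $C$ of $H$, cutting for each $z\in C$ the edge $z\pi(z)$ (the edge $zw'$ when $z$ is a grandchild, the edge $z\pi(w)$ when $z$ is a leaf child) cuts every request that is an edge of $H$ and uses exactly $|C|$ edges, exactly as for $G_u$ in the Preliminaries. For the matching lower bound I would observe that, in any cut, the set of vertices $z$ of $H$ for which some cut edge is incident to $z$ (or to its parent $w'$, when $z$ is a grandchild) must be a vertex cover of $H$, since otherwise some request forming an edge of $H$ survives; hence the charge assigned to each component as in the statement of (c) is at least $\tau(H)$. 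Combining this with the upper bound and the global minimality of the cut forces the charge of every path $P$ and cycle $C$ to equal $\tau(P)$ and $\tau(C)$, which is (c).

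For (b) I would use the extra hypothesis that an endpoint $u$ of $P$ is a grandchild to anchor the exchange at the bottom of the tree: choosing the minimum vertex cover of $P$ oriented at $u$, I would convert any cut edge that is not an endpoint edge into the corresponding edge $zw'$, which cuts the same request of $P$ without disturbing any request outside $P$ (in particular the intra-$G_{w'}$ requests, the special quadruples, and requests at higher levels). This realizes an actual minimum vertex cover of $P$ as genuine cut edges, as claimed.

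The main obstacle is the ``middle'' edges $\pi(w)w'$: cutting such an edge disconnects the entire subtree $T_{w'}$, can therefore cut several cross requests of children of $w'$ simultaneously, and corresponds to no single vertex of $G^*_{\pi(w)}$, so a priori it could make a minimum cut strictly cheaper than any vertex-cover-based cut and break the correspondence. The resolution is precisely part (ii) of Proposition~\ref{prop:properties}: the only cross-requesting children of $w'$ are the two endpoints $u,v$ of the unique isolated edge $(u,v)$ of $G_{w'}$, and this edge $(u,v)$ is itself an edge of $G^*_{\pi(w)}$. Since the request $(u,v)$ is an intra-$w'$ request, it already forces one of $uw', vw'$ into the cut while the middle edge does \emph{not} cut it; consequently $\pi(w)w'$ never saves an edge over cutting the endpoints' own edges, and any cut through $T_{w'}$ can be rerouted into endpoint (own-edge) form without increasing its size. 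Carrying out this rerouting for all such $w'$ at once, while keeping each per-component charge equal to its vertex-cover number, is the technical heart of the argument.
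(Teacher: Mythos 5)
Your part (a) is fine and is exactly the paper's argument: a degree count on $G^*_{\pi(w)}$ using parts (i)--(iii) of Proposition~\ref{prop:properties} together with Case~\ref{case:3}. For parts (b) and (c), however, there is a genuine gap: your whole exchange argument lives inside $T_{\pi(w)}$ and never accounts for requests from vertices of $T_{\pi(w)}$ to vertices \emph{outside} $T_{\pi(w)}$. Such requests exist --- they are exactly what Cases~\ref{case:100}--\ref{case:600} later branch on --- and nothing in Proposition~\ref{prop:properties} bounds them, since a ``cross request'' is by definition confined to $T_{\pi(w)}$. This breaks your argument at two points. First, your rerouting step for (b) (``convert any cut edge \ldots{} into the corresponding edge $zw'$, which cuts the same request of $P$ without disturbing \ldots{} requests at higher levels'') is false as stated: reconnecting a vertex that the original cut had severed can un-cut a request from that vertex to a remote part of the tree, and the remaining cut need not cover it. Second, your resolution of the middle-edge obstacle only shows that $\pi(w)w'$ is \emph{locally} wasteful; a minimum cut may contain $\pi(w)w'$ precisely because it severs many long requests leaving $T_{\pi(w)}$ through $w'$, so local waste does not contradict global minimality. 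The paper's proof handles both problems with a device your proposal never introduces: whenever there is surplus (the cut charges strictly more than $\tau(P)$ or $\tau(C)$ to a component), or whenever $w'\pi(w)$ is deleted from the cut, the paper adds the edge $\pi(w)\pi(\pi(w))$, which cuts \emph{every} request leaving $T_{\pi(w)}$ --- in particular everything $w'\pi(w)$ was cutting except the cross requests of the two distinguished children $u,v$ of $w'$ from part (ii) of Proposition~\ref{prop:properties}; those are then re-covered by an explicit swap ($y_1$ exchanged with $y_2$ on type-II paths through $w'$) and a separate endpoint case analysis.

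A secondary issue sits in your lower bound for (c): you argue that the ``covered'' vertices of a component $H$ form a vertex cover, hence the charge is at least $\tau(H)$. But the charge counts \emph{edges}, and a single parent edge $w'\pi(w)$ is simultaneously incident to the parents of several vertices of $H$ (indeed of several components at once), while cutting none of the intra-$G_{w'}$ requests; so ``covered set is a vertex cover'' does not by itself give charge $\geq \tau(H)$. Making this count work is where part (ii) of Proposition~\ref{prop:properties} (the uniqueness of the doubly-cross-requesting edge in $G_{w'}$) must actually be deployed, and your proposal explicitly defers this step (``the technical heart'') rather than carrying it out --- but that deferred step, together with the $\pi(w)\pi(\pi(w))$ substitution above, \emph{is} the proof; without them the claimed equalities in (b) and (c) are not established.
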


\begin{proof}

Part (a) follows from parts (i), (ii), (iii) of Proposition~\ref{prop:propertiesbranching}.

Parts (b) and (c) are similar to Lemma~\ref{lem:minvc} in spirit. Proving them, however, is more subtle since a minimum cut can cut an important vertex, which would cut all cross requests from its children, and important vertices are not vertices of $G^*_{\pi(w)}$.

Consider a minimum cut $C_{min}$ of $T$. Call a path in $G^*_{\pi(w)}$ whose both endpoints are leaf children of $\pi(w)$ a {\em type-I path}, and a path with at least one endpoint that is a grandchild of $\pi(w)$ a {\em type-II path}.

To prove part (b), consider a type-II path $P$ in $G^*_{\pi(w)}$. It is not difficult to see that any cut to $T$ must cut at least $\tau(P)$ (the size of a minimum vertex cover of $P$) many vertices of $P$. Moreover, for every vertex of $P$, all its requests to vertices in $V(T_{\pi(w)})$ are to vertices on $P$. Therefore, if $C_{min}$ cuts more than $\tau(P)$ many vertices from $P$, then the vertices of $P$ that are cut by $C_{min}$ can be replaced by those in a minimum vertex cover of $P$ plus vertex $\pi(w)$; this will result in a minimum cut of $T$ that cuts exactly $\tau(P)$ many vertices from $P$. Therefore, we can assume that $C_{min}$ cuts precisely $\tau(P)$ vertices from $P$, and it suffices to show that the vertices of $P$ that are cut by $C_{min}$ can be replaced by a vertex cover of $P$.

Suppose that $P=(u_1, u_2, \ldots, u_i)$, where $u_1$ is a grandchild of $\pi(w)$, and let $S$ be the set of vertices in $P$ that are grandchildren of $\pi(w)$; note that $u_1 \in S$. If $C_{min}$ does not cut any parent of a vertex in $S$, then it can be readily seen that the vertices of $P$ that are cut by $C_{min}$ must form a vertex cover of $P$ (those vertices would consist only of grandchildren and leaf children of $\pi(w)$, and hence of vertices of $G^*_{\pi(w)}$). Suppose now that $C_{min}$ cuts al least one parent of a vertex in $S$. Suppose first that $C_{min}$ cuts the parent of a vertex $v \in S$ such that $v$ is not an endpoint or a sibling of an endpoint of $P$; let $w'=\pi(v)$ (note that in this case $w'\neq \pi(u_1)$). Let $x$ be the child of $w'$ such that $(v, x) \in R$. By part (ii) of Proposition~\ref{prop:properties}, the edge $w' \pi(w)$ does not cut any request on a type-I path or a cycle in $G^*_{\pi(w)}$ because $v, x$ are the only children of $w'$ that both have cross requests and such that $(x, v) \in R$. Therefore, the edge $w'\pi(w)$ only cuts type-II paths of the form $(y_1, y_2, \ldots, y_j)$ where both $y_1$ and $y_2$ are children of $w'$. For each such path $(y_1, y_2, \ldots, y_j)$, if $C_{min}$ contains $y_1$ then swap it with $y_2$; we still get a minimum cut, say $C_{min}$ without loss of generality, such that $C_{min} - w'\pi(w)$ cuts all requests on type-II paths of the form  $(y_1, y_2, \ldots, y_j)$ where $y_1$ and $y_2$ are children of $w'$. Repeating the above for each such vertex $v$, and replacing the edges in $C_{min}$ that are incident on vertices of $P$ with edges that are incident on some minimum vertex cover of $P$, and replacing edge $w'\pi(w)$ in $C_{min}$ with edge $\pi(w)\pi(\pi(w))$, yields a minimum cut that cuts the vertices in a minimum vertex cover of $P$.

We can assume now that $C_{min}$ cuts the parent of an endpoint of $P$, say $w'=\pi(u_1)$. Then $C_{min}$ does not cut any parent of a vertex $v \notin \{u_1, u_2\}$ in $S$, and $C_{min}$ must cut a vertex cover of the subpath $(u_3, \ldots, u_i)$. Moreover, $C_{min}$ must cut either $u_1$ or $u_2$. If $C_{min}$ cuts $u_1$ then replace $u_1$ with $u_2$ to obtain a minimum cut that cuts a minimum vertex cover of $P$; otherwise, $C_{min}$ cuts a minimum vertex cover of $P$. The case is similar if $C_{min}$ cuts an endpoint of $P$ that is a leaf child of $\pi(w)$. This proves part (b).

To prove part (c), consider a type-I path or a cycle in $G^*_{\pi(w)}$. Observe that by part (a) above and by part (ii) of Proposition~\ref{prop:properties}, no edge in $C_{min}$ that cuts a request on a type-I path or a cycle in $G^*_{\pi(w)}$ cuts any other request on a type-I path or cycle. (Note that its is possible that an edge that cuts a request on a type-I path or cycle cuts a request on a type-II path. However, this will not affect the fact that $C_{min}$ cuts exactly the size of a minimum vertex cover many vertices from every type-II path). Therefore, if $C_{min}$ cuts more than $\tau(P)$ many vertices from a type-I path $P$ in $G^*_{\pi(w)}$, or more than $\tau(C)$ from a cycle $C$ in $G^*_{\pi(w)}$, then those vertices that are cut by $C_{min}$ can be replaced by the vertices in a minimum vertex cover of $P$ or $C$, plus vertex $\pi(w)$, to yield a minimum cut of $T$ that cuts exactly $\tau(P)$ many vertices from every path $P$ and $\tau(C)$ many vertices from every cycle $C$ in $G^*_{\pi(w)}$. This completes the proof.
\end{proof}

The following reduction rule follows from parts (b) and (c) of Proposition~\ref{prop:propertiesbranching} after noticing that for a path of even length, there is a unique set of edges of cardinality $\tau(P)$ in $E(T_{\pi(w)})$ that cuts all requests corresponding to the edges of $P$:

\begin{reduction}\label{red:10}
Let $T$ be a reduced tree with root $r$, and let $w \in T$ be an important vertex that is farthest from $r$, and such that $\pi(w) \neq r$. Suppose that none of BranchRule~\ref{branch:1}, BranchRule~\ref{branch:3}, or the above Cases applies. If there exists a path $P$ in $G^*_{\pi(w)}$ of even length then cut the vertices in $P$ that correspond to the unique vertex cover of $P$.
\end{reduction}

The following branching rule follows from parts (b) and (c) of Proposition~\ref{prop:propertiesbranching}, after noticing that for a path of odd length in $G^*_{\pi(w)}$, there is a unique set of edges of cardinality $\tau(P)$ in $E(T_{\pi(w)})$ that cuts all requests corresponding to the edges of $P$ in addition to cutting an endpoint of $P$:

\begin{branchrule}\label{branch:4}
Let $T$ be a reduced tree with root $r$, and let $w \in T$ be an important vertex that is farthest from $r$, and such that $\pi(w) \neq r$. Suppose that none of BranchRule~\ref{branch:1}, BranchRule~\ref{branch:3}, or the above Cases applies. If there exists a path $P$ in $G^*_{\pi(w)}$ of odd length such that $|P| > 3$, let $u$ be an endpoint of $P$ and let $C_{u}$ be the (unique) minimum vertex cover of $P$ containing $u$. Branch by cutting the vertices in $C_{u}$ and contracting the edges between $w$ and vertices in $V(P) - C_u$ in the first side of the branch, and by cutting the neighbor of $u$ in $P$ in the second side of the branch. Since $|C_u| = (|P|+1)/2 \geq 3$, $L(k)$ satisfies the recurrence relation: $L(k) \leq L(k-3) + L(k-1)$.
\end{branchrule}

The following branching rule follows from parts (b) and (c) of Proposition~\ref{prop:propertiesbranching} after noticing that, for a cycle of even length in $G^*_{\pi(w)}$ there are exactly two sets of edges in $E(T_{\pi(w)})$, each of cardinality $\tau(P)$, such that each cuts all requests corresponding to the edges of $C$:

\begin{branchrule}\label{branch:5}
Let $T$ be a reduced tree with root $r$, and let $w \in T$ be an important vertex that is farthest from $r$, and such that $\pi(w) \neq r$. Suppose that none of BranchRule~\ref{branch:1}, BranchRule~\ref{branch:3}, or the above Cases applies. If there exists a cycle $C$ in $G^*_{\pi(w)}$ of even length, branch into a two sided branch: in the first side of the branch cut the vertices corresponding to one of the minimum vertex covers of $C$, and in the second  side of the branch cut the vertices corresponding to the other minimum vertex cover of $C$.  Since $|C| \geq 4$, and hence $\tau(C) \geq 2$, we get $L(k) \leq 2L(k-2)$.
\end{branchrule}

\begin{branchrule}\label{branch:5}
Let $T$ be a reduced tree with root $r$, and let $w \in T$ be an important vertex that is farthest from $r$, and such that $\pi(w) \neq r$. Suppose that none of BranchRule~\ref{branch:1}, BranchRule~\ref{branch:3}, or the above Cases applies. If there exists a cycle of odd length $C=(u_1, u_2, \ldots, u_{2\ell+1})$ in $G^*_{\pi(w)}$ such that $\ell \geq 3$ and $C$ is not a cycle in $G_{w'}$ for some child $w'$ of $\pi(w)$, then branch as follows. First observe that since $C$ is not a cycle in $G_{w'}$ for some child $w'$ of $\pi(w)$, $|C|$ is odd, and $T$ is reduced, at least one vertex, say $u_1$ on $C$ must be a leaf child of $\pi(w)$. We favor the vertices in $\{u_2, u_{2\ell+1}\}$ that are grandchildren of $\pi(w)$ (if any). In the first branch $u_1$ is kept and $u_2, u_{2\ell+1}$ are cut. In the second side of the branch $u_1$ is cut, and the cycle becomes a path of odd length at least 5; therefore, we can further branch according to BranchRule~\ref{branch:4}. This yields $L(k) \leq 2L(k-2) + L(k-4)$.
\end{branchrule}

Now let $T$ be a reduced tree with root $r$, and let $w \in T$ be an important vertex that is farthest from $r$, and such that $\pi(w) \neq r$. Suppose that none of the branching rules of the above cases applies. Then each vertex in $V(T_{\pi(w)}) - \pi(w)$ is contained in one of the following structures/groups in $G^*_{\pi(w)}$. Group I, abbreviated $GP_1$, are paths of length 1 (edge) in $G^*_{\pi(w)}$ between two children of an important child of $\pi(w)$, Group II, abbreviated $GP_2$, are paths of length 1 in $G^*_{\pi(w)}$ between two leaf children of $\pi(w)$, Group III, abbreviated $GP_3$, are paths of length 3 in $G^*_{\pi(w)}$ but not in $G_w'$ for any important child of $\pi(w)$, Group IV, abbreviated $GP_4$ are cycles of length 3 in $G^*_{\pi(w)}$ but not in $G_w'$ for any important child of $\pi(w)$, Group V, abbreviated $GP_5$, are cycles of lengths 5 in $G^*_{\pi(w)}$  but not in $G_w'$ for any important child of $\pi(w)$, Group VI, abbreviated $GP_6$ are special quadruples, and Group VII, abbreviated $GP_7$, are paths of length $3$ or cycles in $G_{w'}$, for some important child $w'$ of $\pi(w)$. Note that no vertex in $GP_7$ can have a cross request. The structure of the groups are illustrated in Figures~\ref{fig:1}~--~\ref{fig:6}, in addition to Figure~\ref{fig:quadruple}.

\include{figures}

Let $w_1 = \pi(w)$, and let $w_2, \ldots, w_l$ be the siblings of $w_1$. Each sibling $w_i$ of $w_1$ is either a leaf, an important vertex, or $T_{w_i}$ has a similar structure to $T_{w_1}$.

\begin{case}
\label{case:100}
If for any $w_i$, there is no request from a vertex in $T_{w_i}$ to a vertex in some tree $T_{w_j}$, for any $j\neq i$, then contract the edge between $w_i$ and its parent.
\end{case}
This can be seen as follows. If $w_i$ is cut by some minimum cut $C_{min}$, then since there are no requests between vertices in $T_{w_i}$ and a vertex in $T_{w_j}$, for any $j \neq i$ in $\{1, \ldots, l\}$, edge $w_i\pi(w_i)$ can be replaced by the edge between $\pi(w_i)$ and its parent to yield a minimum cut that excludes the edge between $w_i$ and its parent. Therefore, we can assume that, for any $i \in \{1, \ldots, l\}$, there exists a request between some vertex in $T_{w_i}$ and a vertex in some $T_{w_j}$. Moreover, for any vertex $u$ in $T_{w_i}$, there exists a minimum cut that cuts $u$. Further, if any edge $e$ on the path between $\pi(w_i)$ and $u$ is part of a minimum cut, then there is a minimum cut that includes $e$ and cuts $u$ as well. Therefore, $u$ can be favored in the sense that if $u$ is kept in a certain branch then all edges on the path between $u$ and $w_i$ are kept as well.

Consider now $w_i$ for some fixed $i$. Let $u$ be a vertex in $T_{w_i}$ that has a request to a vertex $x$ in $T_{w_j}$, for some $j\neq i$. We favor both $u$ and $x$.  We distinguish the following cases.

\begin{case}
\label{case:200}
$u$ is an important vertex. We can branch with $L(k) \leq L(k-3) + L(k-1)$.
\end{case}

Since $u$ is important, $u$ must have two children $y, z$ such that $(y, z) \in R$ and both $z$ and $y$ have cross requests. Suppose that $z$ has a cross request to $z'$ in $T_{w_i}$. Favor $z'$ (if $z'$ is a grandchild of $w_i$) and $z$. Branch by cutting $y$ in the first size of the branch and keeping $z$, and by cutting $z$ in the second side of the branch. In the first side of the branch $z$ is kept and so is $u$. Therefore, $z'$ and $x$ must be cut. This gives $L(k) \leq L(k-3) + L(k-1)$.

We can assume now that $u$ is not an important vertex in $T_{w_i}$. Therefore, $u$ must be a vertex in $G^*_{w_i}$; let $d_u$ be the degree of $u$ in $G^*_{w_i}$.

\begin{case}
\label{case:300}
$d_u =2$.  We can branch with $L(k) \leq L(k-3) + L(k-1)$.
\end{case}

If in a certain branch $u$ is kept, then two edges can be cut. This can be seen as follows. If $u$ is a $GP_3$ vertex, then both neighbors of $u$ in $G^*_{w_i}$ can be cut (favor the neighbors that are not leaf children of $w_i$). If $u$ is a $GP_4$ vertex, let $(u, u_1, u_2)$ be the length-3 cycle containing $u$.  If $u$ is a leaf child of $w_i$, then the parent of $u_1, u_2$ can be cut, in addition to one of $u_1, u_2$ (chosen arbitrarily). If $u$ is not an leaf child of $w_i$, then $u_1$ and $u_2$ can be cut (since $u$ is favored). If $u$ is a $GP_5$ vertex, then by favoring any neighbor of $u$ in $G^*_{w_i}$ that is not a leaf child of $w_i$ (if the neighbor is a leaf child of $w_i$ then there is no need to favor it), it can be easily seen that when $u$ is kept then its two neighbors can be cut. If $u$ is a $GP_6$ vertex, then the same analysis carries as when $u$ is $GP_4$ vertex. Finally, if $u$ is a $GP_7$ vertex, then it can be easily seen that both neighbors of $u$ can be cut.

Therefore, if $d_u=2$, then we can branch by cutting $u$ in the first side of the branch, and keeping $u$ and cutting its two neighbors in $G^*_{w_i}$, in addition to $x$ in the second side of the branch. This gives $L(k) \leq L(k-1) + L(k-3)$.

\begin{case}
\label{case:400}
Suppose now that $d_u=1$ in $G^*_{w_i}$. We can branch with $L(k) \leq 2L(k-2)$.
\end{case}

In this case either $u$ is a $GP_3$ or a $GP_7$ vertex that is an endpoint of a length-3 path, or $u$ is a $GP_1$ or a $GP_2$ vertex . If $u$ is an endpoint of a length-3 path $(u, u_1, u_2, u_3)$, then by part (b) of Proposition~\ref{prop:propertiesbranching}, if $u$ is cut, then $u_2$ must be cut as well. On the other hand, if $u$ is kept then $u_1$ and $x$ must be cut. This gives $L(k) \leq 2L(k-2)$.

We can now assume that all requests between the $T_{w_i}$'s go from $GP_1$ or $GP_2$ vertices to $GP_1$ or $GP_2$ vertices.

\begin{case}
\label{case:500}
$u$ is an endpoint of a $GP_1$ group. We can branch with $L(k) \leq L(k-1) + 2L(k-4)$.
\end{case}

If $u$ is an endpoint of a $GP_1$ group, let $w=\pi(u)$, and let $v$ be the child of $w$ such that $(u, v) \in R$.  Note that $w$ is an important vertex, and hence, there exists $z, y$, children of $w$, such that $(z, y) \in R$ and both $z$ and $y$ have cross requests. We favor $u$ and branch as follows. In the first side of the branch we cut $u$ and keep $v$, and in the second side of the branch we keep $u$ and cut $v$. Let us analyze the second side of the branch when $u$ is kept. In this case $x$ must be cut. Since $u$ is kept and is favored, $w$ is kept as well. Since both $z$ and $y$ have cross requests, $z$ and $y$ are either part of a $GP_3$, $GP_4$, $GP_5$, or $GP_6$ group. If $z$ and $y$ are contained in a $GP_6$ or a $GP_4$ group, then their uncle must be cut leading to a further reduction of the parameter by at least 1. If $z$ and $y$ are contained in a $GP_3$ group $(s, z, y, t)$, then by part (b) of Proposition~\ref{prop:propertiesbranching} either $s, y$ or $z, t$ must be cut, so we can branch further into these two branches. If $z$ and $y$ are part of a $GP_5$ group $(s, z, y, p, q)$, where $p$ and $q$ are children of an important child of $w_i$, then since $w$ is kept we branch on $z$: if $z$ is cut, then $y$ is kept and $p$ is cut (since $w$ is kept), and if $z$ is kept then $y$ and $s$ are cut. In the worst case, we get $L(k) \leq L(k-1) + 2L(k-4)$.

\begin{case}
\label{case:600}
All requests between the $T_{w_i}$'s go between $GP_2$ vertices. We can branch with $L(k) \leq 2L(k-2) + L(k-4)$.
\end{case}

If for every $GP_2$ group in $T_{w_i}$ at most one vertex has a request to some $T_{w_j}$, where $j \neq i$, then there exists a cut of $T_{w_i}$ that cuts all requests to $T_{w_j}$, and whose cardinality is equal to the set of edges in a minimum cut that are contained in $T_{w_i}$; therefore, edge $w_i \pi(w_i)$ can be contracted.  Hence, we can assume that for every $T_{w_i}$, there exists a $GP_2$ in $T_{w_i}$ whose both vertices $u, v$ have requests to vertices in other trees; suppose that $u$ has a request to $u'$ and $v$ to $v'$, where $u'$ and $v'$ are not in $T_{w_i}$. Moreover, we can assume that $T_{w_i}$ contains an important vertex (choose a tree among the $T_{w_j}$'s that contains an important vertex, and by the above argument, there exists a $GP_2$ in $T_{w_j}$ whose both vertices $u, v$ have requests to vertices in other trees). Since each important vertex must have two children with cross requests, and since there is a $GP_2$ in $T_{w_i}$, any minimum cut must cut at least three vertices in $V(T_{w_i}) - w_i$. We branch as follows. Either $w_i$ is cut or is kept. When $w_i$ is cut, at least 3 edges in $E(T_{w_i})$, corresponding to any minimum cut of $T_{w_i}$ can be cut. When $w_i$ is kept, we branch by cutting $u$ and favoring $v'$ in the first side of the branch, and cutting $v$ and favoring $u'$ in the second side of the branch. When $u$ is cut, $v$ is kept, and hence $v'$ must be cut (since $w_i$ is kept). When $v$ is cut, $u$ is kept and hence $u'$ is cut. This gives $L(k) \leq 2L(k-2) + L(k-4)$.

\begin{theorem}
\label{thm:main}
The MCT problem is solvable in time $O^{*}(\rho^k)$, where $\rho = \sqrt{\sqrt{2} + 1} \approx 1.555$.
\end{theorem}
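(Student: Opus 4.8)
The plan is to combine the correctness of the branching scheme with an analysis of the search-tree size. I would first argue \textbf{correctness}. The reduction rules of Section~\ref{sec:structure} are known to preserve the optimal cut size (they are folklore or established in~\cite{multicut}), so applying them exhaustively before each branch is safe. For the branching, I would observe that each BranchRule and each Case above already comes with its own justification showing that (i) the branches are exhaustive, i.e.\ they enumerate every way a minimum cut can interact with the local structure at the chosen vertex, and (ii) each branch is safe, i.e.\ it discards at least one edge contained in some minimum cut. Consequently, if $(T,R,k)$ is a yes-instance, at least one branch produces a yes-instance with a strictly smaller parameter. The remaining correctness obligation is \textbf{exhaustiveness of the rule set}: I would show that whenever $(T,R,k)$ is reduced (Definition~\ref{def:strongreduced}) and still contains a request, picking an important vertex $w$ farthest from $r$ forces the local configuration around $\pi(w)$ to match exactly one of the handled cases. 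This is precisely what the structural invariants of Proposition~\ref{prop:reduced}, Proposition~\ref{prop:properties}, and Proposition~\ref{prop:propertiesbranching} guarantee, since each group of cases restores a stronger invariant for the next group. The base case is the empty request set (a trivial yes-instance) or the situation $k<0$ (a no-instance).

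Next I would bound the \textbf{number of leaves} $L(k)$ of the search tree. Every branch above yields one of a fixed, finite list of recurrences, so it suffices to exhibit the dominant one. Most branches satisfy $L(k)\le L(k-1)+L(k-3)$ (characteristic root $\approx 1.466$) or $L(k)\le 2L(k-2)$ (root $\sqrt 2$). The heavier branches are $L(k)\le 3L(k-3)$ (root $3^{1/3}\approx 1.442$) from Subcase~\ref{subcase:72}, $L(k)\le 3L(k-4)+L(k-2)$ (root $\approx 1.518$) from Subcases~\ref{subcase:73}--\ref{subcase:74}, $L(k)\le L(k-1)+2L(k-4)$ (root $\approx 1.544$) from Case~\ref{case:500}, and $L(k)\le 2L(k-2)+L(k-4)$ from Case~\ref{case:600} and the odd-cycle branching rule. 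The last of these is the worst: its characteristic equation is $x^4-2x^2-1=0$, whose unique positive root is $\rho=\sqrt{1+\sqrt 2}=\sqrt{\sqrt 2+1}\approx 1.555$, obtained by setting $y=x^2$ and solving $y^2-2y-1=0$. Since every other root listed is strictly below $\rho$, a routine induction on $k$ (with $L(k)=O(1)$ for $k\le 0$) gives $L(k)=O(\rho^k)$.

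Finally I would account for the \textbf{work per search-tree node}. Each reduction rule runs in polynomial time; the auxiliary graphs $G_u$, $G_w$, and $G^*_{\pi(w)}$ all have maximum degree at most $2$ (they are disjoint unions of paths and cycles), so a minimum vertex cover and the associated cuts are computable in linear time; and locating a farthest important vertex together with its surrounding structure is polynomial. Hence each node spends $k^{O(1)}$ time, and the total running time is $O(\rho^k)\cdot k^{O(1)} = O^{*}(\rho^{k})$.

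I expect the \textbf{main obstacle} to be the exhaustiveness argument rather than the recurrence bookkeeping: one must verify that the long chain of Cases leaves \emph{no} reduced configuration unhandled, which hinges on checking that each case preserves and strengthens the structural invariants used by the subsequent cases (the content of Proposition~\ref{prop:properties} and Proposition~\ref{prop:propertiesbranching}). A secondary, more mechanical obstacle is confirming that $2L(k-2)+L(k-4)$ truly dominates, i.e.\ that none of the remaining branches hides a recurrence whose characteristic root exceeds $\rho$.
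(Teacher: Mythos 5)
Your proposal is correct and follows essentially the same route as the paper: the paper's own proof is a two-sentence argument that the preceding cases are exhaustive and that the worst recurrence is $L(k) \leq 2L(k-2) + L(k-4)$, whose characteristic polynomial $x^4 - 2x^2 - 1$ has positive root $\rho = \sqrt{\sqrt{2}+1}$. Your version is in fact more explicit than the paper's, since you additionally verify the roots of the competing recurrences (all correctly computed and all below $\rho$) and account for the polynomial work per search-tree node, both of which the paper leaves implicit.
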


\begin{proof}
The above cases exhaust all possible scenarios. The worst branch is $L(k) \leq 2L(k-2) + L(k-4)$, which corresponds to the characteristic polynomial $x^4 - 2 x^2 -1$ whose positive root is  $\sqrt{\sqrt{2} + 1}$. It follows that the running time of the algorithm is $O^{*}(\rho^k)$, where $\rho = \sqrt{\sqrt{2} + 1} \approx 1.555$.
\end{proof}

\section{The GMWCT problem}
\label{sec:gmwct}
There is a simple reduction from GMWCT to MCT.  For an instance $(T, \{S_1, \ldots, S_r\}, k)$ of GMWCT, construct the instance $(T, R, k)$, where the set of requests $R$ is given as follows. For each terminal set $S_i$, $i=1, \ldots, r$, and for every pair of distinct terminals $u, v \in S_i$, add the request $(u, v)$ to $R$. Clearly, $(T, \{S_1, \ldots, S_r\}, k)$ is a yes-instance of GMWCT if and only if $(T, R, k)$ is a yes-instance of MCT.
Combining this reduction with Theorem~\ref{thm:main} we obtain:

\begin{theorem}
\label{thm:gmwct}
The GMWCT problem is solvable in time $O^{*}(\rho^k)$, where $\rho = \sqrt{\sqrt{2} + 1} \approx 1.555$.
\end{theorem}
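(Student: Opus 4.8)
The plan is to prove Theorem~\ref{thm:gmwct} by a parameter-preserving polynomial-time reduction from {\sc GMWCT} to {\sc MCT}, and then invoke Theorem~\ref{thm:main} as a black box. First I would take an arbitrary instance $(T, \{S_1, \ldots, S_r\}, k)$ of {\sc GMWCT} and build the {\sc MCT} instance $(T, R, k)$ on the \emph{same} tree $T$ and with the \emph{same} parameter $k$, where the request set is $R = \{(u,v) : u, v \in S_i,\ u \neq v,\ \text{for some } i \in \{1, \ldots, r\}\}$. That is, each terminal set $S_i$ contributes the clique of pairwise requests among its members. The crucial point is that $k$ is carried over unchanged, so that whatever running time Theorem~\ref{thm:main} guarantees for {\sc MCT} as a function of $k$ transfers verbatim to {\sc GMWCT}.

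Next I would establish the equivalence of the two instances. A set $E' \subseteq E(T)$ of at most $k$ edges is a solution to the {\sc GMWCT} instance precisely when, for every $i$, no two distinct terminals of $S_i$ remain connected in $T - E'$. By the definition of $R$, this holds if and only if every request $(u,v) \in R$ is disconnected in $T - E'$, which is exactly the condition for $E'$ to be a cut of size at most $k$ for the {\sc MCT} instance $(T, R, k)$. Hence $(T, \{S_1, \ldots, S_r\}, k)$ is a yes-instance of {\sc GMWCT} if and only if $(T, R, k)$ is a yes-instance of {\sc MCT}. I would also observe that the reduction runs in polynomial time, since $|R| \leq \sum_{i=1}^{r} \binom{|S_i|}{2}$, which is bounded above by $\binom{|V(T)|}{2}$ and is therefore polynomial in the input size regardless of $r$.

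Finally, I would combine the two observations: apply the $O^{*}(\rho^k)$ algorithm of Theorem~\ref{thm:main} to the constructed instance $(T, R, k)$ and return its answer for the original {\sc GMWCT} instance. Since the reduction is polynomial and preserves the parameter, the total running time is $O^{*}(\rho^k)$ with $\rho = \sqrt{\sqrt{2}+1} \approx 1.555$, as claimed. I do not anticipate any serious obstacle in this argument; the only point requiring care is confirming that expanding each terminal set into its pairwise requests neither changes the tree nor inflates the parameter, and that the blow-up in the number of requests stays polynomial so that the claimed bound, which suppresses polynomial factors in the input length, genuinely subsumes it.
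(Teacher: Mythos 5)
Your proposal is correct and matches the paper's own argument exactly: the paper reduces {\sc GMWCT} to {\sc MCT} by replacing each terminal set $S_i$ with the clique of pairwise requests among its members, keeping $T$ and $k$ unchanged, and then applies Theorem~\ref{thm:main}. Your additional remarks on the equivalence of instances and the polynomial bound on $|R|$ are just the routine details the paper leaves implicit.
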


\subsection{A linear time algorithm for WGMWCT}
\label{subsec:wgmwc}
Next, we show that the GMWCT problem is solvable in linear time when the number of terminal sets is a constant. (Clearly, the problem is NP-complete when the number of terminal sets is part of the input by a simple reduction from the MCT problem.) The algorithm is a dynamic programming algorithm that solves the more general weighted version of the problem, denoted WGMWCT, defined as follows: Given a tree $T$ in which each edge is associated with a nonnegative cost $c(e)$, and terminal sets $\{S_1, S_2, \dots, S_c\}$, where $c$ is a constant, compute a minimum-cost set of edges
whose removal cuts each pair of distinct terminals in $S_i$, for $i=1, \ldots, c$.

{\bf Assumptions.}
First of all, root the input tree $T$ at {\em any} vertex.
Let $r$ be the root of $T$. We first apply some simple preprocessing:
\begin{enumerate}
\item If an internal vertex $u$ is a terminal, we can add a child of $u$, say $w$, and assign the edge $(u, w)$ a large enough cost; then replace each appearance of $u$ in collection $R$ by $w$.
All leaves that are {\em not} terminals can be removed from the tree. After this preprocessing, we may assume that {\em a vertex $u$ is a leaf if and
only if $u$ is a terminal}.

\item If a vertex $u$ has more than $t>2$ children, then we can replace $u$ by a chain of $t-1$ vertices connected by a path in which each edge has a large enough cost. The original children of $u$ is then connected to the vertices in the chain so that each vertex in the chain has two children. After this preprocessing, we may also assume that {\em the input tree $T$ is a binary
tree}.
\end{enumerate}

{\bf Basic idea.}
Our approach to solve WGMWCT is dynamic programming, based on the following
simple observation. When an optimal multiway cut is removed from the input
tree $T$, $T$ is broken into several subtrees. In each subtree, there is
at most one terminal for {\em every} terminal set $S_i$ ($1 \leq i \leq q$), and
there must be at least one terminal which comes from {\em some} terminal set.

{\bf The algorithm.}
Let $u$ be a vertex in the tree $T$ and $T_u$ the subtree of $T$ rooted at $u$. When certain edges are removed in $T_u$, some terminals in the original $T_u$
are still connected to $u$ and some terminals are not. We use a binary vector $\overrightarrow{{\cal B}} = b_1b_2\ldots b_q$ to record the {\em connection pattern} of the tree, where $b_i = 1$ ($1 \leq i \leq q$) if there exists exactly one terminal from
$S_i$ that is connected to $u$ and $b_i = 0$ if there is no terminal from $S_i$ that is connected to $u$ (if two or more terminals from $S_i$ are connected to $u$, then the cut is invalid). Denote by $DP[u, \overrightarrow{{\cal B}}]$ the minimum cost of edges whose removal makes $T_u$ satisfy $\overrightarrow{{\cal B}}$. If such a set of edges does not exists, $DP[u, \overrightarrow{{\cal B}}]$ is defined to be $+\infty$. The goal of the algorithm is to compute $DP[u, \overrightarrow{{\cal B}}]$ for all vertices $u$ and all binary vectors $\overrightarrow{{\cal B}}$.

We consider three cases:

{\bf Case 1. $u$ has two children $v$ and $w$.} We will compute $DP[u, \overrightarrow{{\cal B}}]$ based on the assumption that $DP[v, \overrightarrow{{\cal B}}]$ and $DP[w, \overrightarrow{{\cal B}}]$
have already been computed. Let $\overrightarrow{{\cal B}_{v}}, \overrightarrow{{\cal B}_{w}}$ be connection patterns of $v$ and $w$ respectively, and let ${\cal C}$ be a subset of $\{uv,uw\}$. If $u$ has connection pattern $\overrightarrow{{\cal B}}$ after the edges in ${\cal C}$ are removed, we say that $\overrightarrow{{\cal B}}$ is {\em derived from} the tuple $\{\overrightarrow{{\cal B}_{v}}, \overrightarrow{{\cal B}_{w}}, {\cal C}\}$, denote by $$\overrightarrow{{\cal B}} \leftarrow \{\overrightarrow{{\cal B}_{v}}, \overrightarrow{{\cal B}_{w}}, {\cal C}\}.$$

We have the following recurrence relation:
$$DP[u, \overrightarrow{{\cal B}}] = \min_{\overrightarrow{{\cal B}} \leftarrow \{\overrightarrow{{\cal B}_{v}}, \overrightarrow{{\cal B}_{w}}, {\cal C}\}} \left(DP[v, \overrightarrow{{\cal B}_{v}}] + DP[w, \overrightarrow{{\cal B}_{w}}] + c({\cal C})\right).$$

Although there are $2^q\times 2^q \times 4$ possible tuples
$\{\overrightarrow{{\cal B}_{v}}, \overrightarrow{{\cal B}_{w}}, {\cal C}\}$, less number of tuples need to be considered when computing $DP[u, \overrightarrow{{\cal B}}]$:

If ${\cal C} =\{uv,uw\}$ (i.e., both $uv$ and $uw$ are removed), then $\overrightarrow{{\cal B}} = \overrightarrow{0}$ regardless of $\overrightarrow{{\cal B}_{v}}$ and $\overrightarrow{{\cal B}_{w}}$. So we only need to consider one tuple $\{\overrightarrow{{\cal B}_{v}}^*, \overrightarrow{{\cal B}_{w}}^*, \{uv,uw\}\}$, where $DP[v, \overrightarrow{{\cal B}_{v}}^*]$ is minimum among all $DP[v, \overrightarrow{{\cal B}_{v}}]$ and $DP[w, \overrightarrow{{\cal B}_{w}}^*]$ is minimum among all $DP[w, \overrightarrow{{\cal B}_{w}}]$.

If ${\cal C} = \{uv\}$ (i.e., $uv$ is removed), then $\overrightarrow{{\cal B}} = \overrightarrow{{\cal B}_{w}}$ regardless of $\overrightarrow{{\cal B}_{v}}$. So we only need to consider $2^q$ tuples $\{\overrightarrow{{\cal B}_{v}}^*, \overrightarrow{{\cal B}_{w}}, \{uv\}\}$, where $DP[v, \overrightarrow{{\cal B}_{v}}^*]$ is minimum among all $DP[v, \overrightarrow{{\cal B}_{v}}]$. Similarly, if ${\cal C} = \{uw\}$, only $2^q$ tuples are considered.

If ${\cal C} = \emptyset$ (neither edge is removed), then the two vectors $\overrightarrow{{\cal B}_{v}}$ and $\overrightarrow{{\cal B}_{w}}$ cannot both have 1 at the same position $i$ because this would imply $T_u$ has two terminals in $S_i$. Therefore each position $i$ has only three possible pairs of values in the two vectors: (0,0) (1,0), and (0,1). In other words, only $3^q$ tuples are considered.

Therefore, for any vertex $u$, it takes time $O(3^q)$ to compute $DP[u, \overrightarrow{{\cal B}}]$ for all $\overrightarrow{{\cal B}}$.

{\bf Case 2. $u$ has only one child $v$.} Similar to the above, we have the following recurrence relation:
$$DP[u, \overrightarrow{{\cal B}}] = \min_{\overrightarrow{{\cal B}} \leftarrow \{\overrightarrow{{\cal B}_{v}}, {\cal C}\}} DP[v, \overrightarrow{{\cal B}_{v}}] + c({\cal C}).$$
In this case, it takes time $O(2^q)$ to compute $DP[u, \overrightarrow{{\cal B}}]$ for all $\overrightarrow{{\cal B}_u}$.

{\bf Case 3. $u$ is a leaf.} By our assumption, $u$ must be a terminal. We say that the vector $\overrightarrow{{\cal B}} = b_1b_2\ldots b_q$ is valid if $b_i = 1$ ($1 \leq i \leq q$) if and only if $u \in S_i$.
The formula for computing $DP[u, \overrightarrow{{\cal B}}]$ is
\begin{displaymath}
DP[u, \overrightarrow{{\cal B}}]
= \left\{
\begin{array}{ll}
0 & ~~~~~~~~\mbox{if $\overrightarrow{{\cal B}_u}$ is valid.}\\
+\infty & ~~~~~~~~\mbox{if $\overrightarrow{{\cal B}_u}$ is not valid.}
\end{array}
\right.
\end{displaymath}
In this case, it takes time $O(2^q)$ to compute $DP[u, \overrightarrow{{\cal B}}]$ for all $\overrightarrow{{\cal B}}$.

We can compute $DP[u, \overrightarrow{{\cal B}}]$ for all vertices $u$ in the tree $T$ in a bottom-up manner.
The optimal value $\mathrm{OPT}$ is equal to
$\min(DP[r, \overrightarrow{{\cal B}}])$. By backtracking, we can recover the optimal solution achieving
$\mathrm{OPT}$. Since the computation at any given vertex $u$ takes time $O(3^q)$, the total running time of our dynamic programming algorithm is
$O(3^q n)$, which is linear in $n$ because $q$ is a constant.
Thus we have
\begin{theorem}
\label{th - GMWCT can be solved in linear time}
The Generalized Multiway Cut problem on trees can be solved in linear time.
\end{theorem}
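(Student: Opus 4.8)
The plan is to prove correctness and the running-time bound of the dynamic-programming algorithm described above by induction on the structure of the rooted binary tree $T$. First I would verify that the two preprocessing steps preserve the optimal value: attaching a heavy pendant edge to each internal terminal (and deleting non-terminal leaves) does not change which edge sets are feasible, because the large cost guarantees the new edges are never cut; and ``binarizing'' a high-degree vertex by a heavy internal path likewise forces those path edges to stay, so cutting the original incident edges is simulated exactly. After preprocessing, every leaf is a terminal and $T$ is binary, which is precisely what the three recurrence cases assume.

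The heart of the argument is to fix the exact semantics of the table and prove the recurrences realize it. I would define $DP[u,\overrightarrow{{\cal B}}]$ to be the minimum cost of an edge set $F\subseteq E(T_u)$ such that (i) $T_u-F$ is \emph{feasible}, i.e.\ no connected component contains two terminals of the same set $S_i$, and (ii) the component of $T_u-F$ containing $u$ has connection pattern exactly $\overrightarrow{{\cal B}}$ (with $+\infty$ if no such $F$ exists). The key invariant is that feasibility of the \emph{whole} subtree is baked into the definition, while $\overrightarrow{{\cal B}}$ records only the component hanging off $u$. I would then prove the three recurrences compute these values by induction: for a leaf the only feasible pattern is its membership vector (Case~3); for a single child the edge $uv$ is either cut, detaching $v$'s finished feasible component whose pattern is an arbitrary valid vector, or kept, so $u$ inherits $v$'s pattern, matching Case~2; and for two children a merge across a kept edge is feasible if and only if the two incoming patterns share no $1$-position, which is exactly the $(0,0),(1,0),(0,1)$ restriction in the $\mathcal{C}=\emptyset$ subcase. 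The main obstacle, and the step I would treat most carefully, is this two-children case: I must argue both directions — that every tuple allowed by the recurrence yields a feasible $F$ of the stated cost, and conversely that an optimal $F$ decomposes into optimal restrictions to $T_v$ and $T_w$ together with the cut status of $uv$ and $uw$, so that restricting attention to the pruned family of tuples ($3^q$ rather than $4^q$) loses no optimal solution.

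Finally I would extract the answer as $\mathrm{OPT}=\min_{\overrightarrow{{\cal B}}} DP[r,\overrightarrow{{\cal B}}]$, since any feasible cut of $T$ leaves $r$ in some feasible component, and recover an explicit edge set by backtracking through the minimizing tuples. For the running time I would count the work per vertex: by the case analysis above, the four choices of $\mathcal{C}$ contribute $O(1)$, $O(2^q)$, $O(2^q)$, and $O(3^q)$ tuples respectively, so computing all entries $DP[u,\cdot]$ costs $O(3^q)$, and summing over the $O(n)$ vertices gives $O(3^q n)$, which is linear in $n$ since $q$ is constant. The only remaining subtlety to check is that the per-vertex computation is organized so that the $O(2^q)$-sized prefix minima such as $\min_{\overrightarrow{{\cal B}_v}} DP[v,\overrightarrow{{\cal B}_v}]$ are precomputed once, preventing the $3^q$ term from inflating to $3^q\cdot 2^q$; this is routine bookkeeping rather than a genuine difficulty.
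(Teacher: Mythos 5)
Your proposal is correct and follows essentially the same route as the paper: the identical preprocessing to a binary tree whose leaves are exactly the terminals, the same table $DP[u,\overrightarrow{{\cal B}}]$ with the three-case recurrence and the $(0,0),(1,0),(0,1)$ restriction for merging across kept edges, and the same $O(3^q n)$ accounting. Your explicit statement of the table semantics, the two-directional decomposition argument in the two-children case, and the remark about precomputing the child minima merely make rigorous what the paper leaves implicit.
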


\section{Concluding remarks}
\label{sec:conclusion}
In this paper we gave a parameterized algorithm for the {\sc MCT} problem that runs in time $O^*(1.555^k)$, improving the previous algorithm by Chen et al. that runs in time $O^{*}(1.619^k)$. The aforementioned result implies that the {\sc GMWCT} problem is solvable in time $O^*(1.555^k)$ as well. We also proved that the special case of the {\sc GMWCT} problem in which the number of terminal sets is fixed is solvable in linear time, answering an open question by Liu and Zhang~\cite{LZ12}.

There are several questions related to the problems under consideration that remain open and are worth pursuing. First, it is interesting to seek further improvements on the running time of parameterized algorithms for the {\sc MCT} and the {\sc GMWCT} problems. The {\sc vertex cover} problem admits a parameterized algorithm that runs in time $O^*(1.2738^k)$~\cite{ckj}, and one can ask if the connection between {\sc MCT} and {\sc vertex cover} can be exploited further to improve the running time of parameterized algorithms for {\sc MCT} further. Another interesting research direction is related to kernelization. Currently, {\sc MCT} is know to admit a kernel of size $O(k^3)$~\cite{multicut}, and it is interesting to investigate if the problem admits a quadratic, or even a linear, kernel. The $O(k^3)$ kernel for {\sc MCT} relies heavily on kernelization techniques used for {\sc vertex cover}. So again, it would be interesting to investigate if one can exploit this connection further to obtain improvements on the kernel size for the {\sc MCT} problem. We leave those as open questions for future research.

\bibliographystyle{plain}
\bibliography{ref}

\end{document}